\newif\ifreport\reporttrue
\def\BibTeX{{\rm B\kern-.05em{\sc i\kern-.025em b}\kern-.08emT\kern-.1667em\lower.7ex\hbox{E}\kern-.125emX}}
\renewcommand\footnotetextcopyrightpermission[1]{}
\theoremstyle{acmdefinition}
\theoremstyle{acmdefinition}\newtheorem{remark}{Remark}
\begin{document}

\title[]{Optimizing Information Freshness using Low-Power Status Updates via Sleep-Wake Scheduling}  %

\author{Ahmed M. Bedewy}
\orcid{1234-5678-9012}
\affiliation{%
  \institution{Department of ECE\\The Ohio State University}
  \city{Columbus}
  \state{OH}
}
\email{bedewy.2@osu.edu}

\author{Yin Sun}
\affiliation{%
  \institution{Department of ECE\\Auburn University}
  \city{Auburn}
  \state{AL}
}
\email{yzs0078@auburn.edu}

\author{Rahul Singh}
\affiliation{%
  \institution{Department of ECE\\The Ohio State University}
  \city{Columbus}
  \state{OH}
}
\email{singh.1434@osu.edu}

\author{Ness B. Shroff}
\affiliation{%
  \institution{Departments of ECE and CSE\\The Ohio State University}
   \city{Columbus}
  \state{OH}}
\email{shroff.11@osu.edu}

\begin{abstract}

In this paper, we consider the problem of optimizing the freshness of status updates that are sent from a large number of  low-power source nodes to a common access point. The source nodes utilize carrier sensing to reduce collisions and adopt an asychronized sleep-wake strategy to achieve an extended battery lifetime (e.g., 10-25 years). We use \emph{age of information} (AoI) to measure the freshness of status updates, and design the sleep-wake parameters for minimizing the weighted-sum peak AoI of the sources, subject to per-source battery lifetime constraints. When the sensing time is zero, this sleep-wake design problem can be solved by resorting to nested convex optimization; however, for positive sensing times, the problem is non-convex. We devise a low-complexity solution to solve this problem and prove that, for practical sensing times, the solution is within a small gap from the optimum AoI performance. Our numerical and NS-3 simulation results show that our solution can indeed elongate the batteries lifetime of information sources, while providing a competitive AoI performance.


\end{abstract}


\maketitle




\section{Introduction}\label{Int}
In applications such as networked monitoring and control systems, wireless sensor networks, autonomous vehicles, it is crucial for the destination node to receive timely status updates so that it can make accurate decisions.  \emph{Age of information} (AoI) has been used to measure the freshness of status updates. 
More specifically, AoI \cite{KaulYatesGruteser-Infocom2012} is the age of the freshest update at the destination, i.e., it is the time elapsed since the most recently received update was generated. It must be noted that optimizing traditional network performance metrics such as throughput or delay do not attain the goal of timely updating. For instance, it is well known that AoI could become very large when the offered load is high or low \cite{KaulYatesGruteser-Infocom2012}. 

In a variety of information update systems, energy consumption is also a critical constraint. For example, wireless sensor networks are used for monitoring crucial natural and human-related activities, e.g. forest fires, earthquakes, tsunamis, etc. Since such applications often require the deployment of sensor nodes in remote or hard-to-reach areas, they need to be able to operate unattended for long durations. Likewise, in medical sensor networks, since battery replacement/recharging involves a series of medical procedures, thereby providing disutility to patients, energy consumption must be constrained in order to support a long battery life of up to 10-15 years~\cite{timmons2004analysis}. 
Therefore, for networks serving such real-time applications, prolonging battery-life is just as crucial as guaranteeing a small AoI. Existing works on multi-source networks, e.g., \cite{yates2017status,talak2018distributed,li2013throughput,kadota2016minimizing_journal,hsu2017scheduling_2,jiang2018timely,kadota2018optimizing,talak2018optimizing2,aphermedis_he2017optimal,DBLP:journals/ton/GuoSKN18},  focused exclusively on minimizing the AoI and overlooked the need to reduce power consumption. This motivates us to derive algorithms that achieve a trade-off between the competing tasks of minimizing AoI and reducing the energy consumption in multi-source networks.

Additionally, some applications are characterized by a large number (typically hundreds of thousands) of densely packed wireless nodes serviced by only a single access point (AP). Examples include machine-type communication~\cite{kowshik2019energy}. The dataloads in such ``dense networks'' \cite{kowshik2019energy,kowshik2019fundamental} are created by applications such as home security and automation, oilfield and pipeline monitoring, smart agriculture, animal tracking and livestock, etc. This introduces high variability in the data packet sizes so that the transmission times of data packets are random. Thus scheduling algorithms that are designed for time-slotted systems with a fixed transmission duration, are not applicable to these systems. Besides that, synchronized scheduler for time-slotted systems are feasible when there are relatively few sources and each source has sufficient energy. However, if there are a huge number of sources, and each source has limited energy and low traffic rate, coordinating synchronized transmissions is quite challenging. This motivates us to design randomized protocols that coordinate the transmissions of multiple conflicting transmitters connected to a single AP.


Towards that end, we consider a wireless network with $M$ sources that contend for channel access and communicate their update packets to an AP. Each source is equipped with a battery that may get charged by a renewable source of energy, e.g., solar. Moreover, each source employs a ``sleep-wake'' scheme \cite{chen2013life} under which it transmits a packet if the channel is sensed idle; and sleeps if either: (i) It senses the channel to be busy, (ii) it completes a packet transmission. This enables each source to save the precious battery power by switching off at times when it is unlikely to gain channel access for packet transmissions. 

However, since a source cannot transmit during the sleep period, this causes the AoI to increase. We thus carefully design these sleeping periods so that the cumulative weighted average peak age of all sources is minimized, while ensuring that the energy consumption of each source is below its average battery power. 
To the best of our knowledge, this is the first work that considers AoI minimization in multi-source networks while simultaneously incorporating per-source battery lifetime constraints.

\subsection{Related Works}
There has been a significant effort on analyzing the AoI performance of popular queueing service disciplines, e.g., the First-Come, First-Served (FCFS) \cite{KaulYatesGruteser-Infocom2012}
 Last-Come, First-Served (LCFS) with and without preemption \cite{RYatesTIT16_2},
and queueing systems with packet management \cite{CostaCodreanuEphremides_TIT}.
 In \cite{age_optimality_multi_server,Bedewy_NBU_journal_2,multihop_optimal,bedewy2017age_multihop_journal_2,Yin_multiple_flows}, the age-optimality of Last-Generated, First-Served (LGFS)-type policies in multi-server and multi-hop networks was established, where it was shown that these policies can minimize any non-decreasing functional of the age processes. 
The fundamental coupling of data sampling and transmission in information update systems was investigated in \cite{SunJournal2016,sun2018sampling_2}, where sampling policies were designed to minimize any nonlinear age functions in single source systems. These studies were later extended to a multi-source scenario in \cite{multi_source_bedewy_2}. 

Designing scheduling policies for minimizing AoI in multi-source networks has recently received increasing attention, e.g., \cite{DBLP:journals/ton/GuoSKN18,yates2017status,talak2018distributed,li2013throughput,kadota2016minimizing_journal,hsu2017scheduling_2,jiang2018timely,kadota2018optimizing,talak2018optimizing2,aphermedis_he2017optimal}. Of particular interest, are those pertaining to designing distributed scheduling policies \cite{yates2017status,talak2018distributed,li2013throughput,hsu2017scheduling_2,kadota2016minimizing_journal,jiang2018timely}. The work in \cite{yates2017status} considered slotted ALOHA-like random access scheme in which each node accesses the channel with a certain access probability. These probabilities were then optimized in order to minimize the AoI. However, the model of~\cite{yates2017status} allows multiple interfering users to gain channel access simultaneously, and hence allows for the collision. The authors in \cite{talak2018distributed} generalized the work in~\cite{yates2017status} to a wireless network in which the interference is described by a general interference model. 
The Round Robin or Maximum Age First policy was shown to be (near) age-optimal for different system models, e.g., in \cite{li2013throughput,hsu2017scheduling_2,kadota2016minimizing_journal,jiang2018timely}.

A central component of the scheme proposed in this work is the carrier sensing mechanism in which sources sense the channel to detect times during which no interfering transmissions occur. We note that such mechanisms are employed in numerous distributed medium-access schemes in wireless networks, such as Carrier Sense Multiple Access (CSMA), see~\cite{yun2012optimal} for a recent survey of the existing schemes. Thus, there has been an interest in designing CSMA-based scheduling schemes that optimize the AoI \cite{maatouk2019minimizing,wang2019broadcast}. In \cite{maatouk2019minimizing}, the authors employed the standard idealized CSMA in \cite{jiang2010distributed} to minimize the AoI with an exponentially distributed packet transmission times.  In \cite{wang2019broadcast}, the authors employed the slotted Carrier Sense Multiple Access/Collision-Avoidance (CSMA/CA) in \cite{bianchi2000performance} to minimize the broadcast age of information, which is defined, from a sender perspective, as the age of the freshest successfully broadcasted packet. Contrary to these works, the sleep-wake scheme proposed by us emphasizes on reducing the cumulative energy consumption in multi-source networks in addition to minimizing the cumulative weighted AoI. Moreover, in our study, transmission times are not necessarily random variables with some commonly used parametric density \cite{maatouk2019minimizing}, or deterministic \cite{wang2019broadcast}, but can be any generally distributed random variables with finite mean.


\begin{figure*}[t]
\includegraphics[scale=.6]{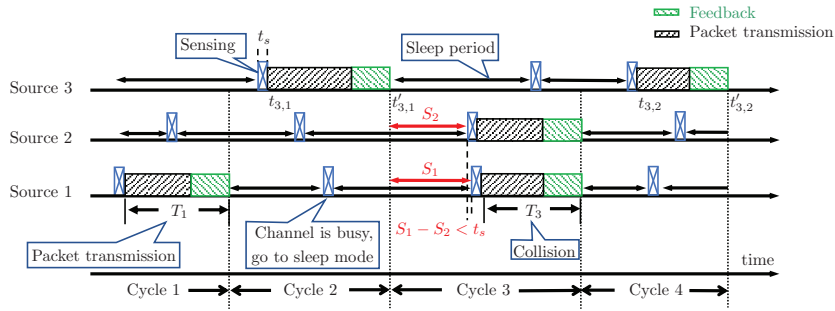}
\centering
\captionsetup{justification=justified}
\caption{Illustration of the sleep-wake cycles. In Cycle 1-2, we have successful packet transmissions. Let $S_1$ and $S_2$ represent the remaining sleeping times of Sources 1 and 2,  respectively, after a successful transmission. Then, a collision occurs in Cycle 3 because 
 the difference between wake-up times of Sources 1 and 2 is less than $t_s$, i.e., $S_1-S_2<t_s$. As we can observe, each cycle consists of an idle period before a transmission/collision event.}
 \label{sleep_wake_cycle}
\end{figure*}
\subsection{Key Contributions}
Our key contributions are summarized as follows:
\begin{itemize}

\item The problem of minimizing the total weighted average peak age over the sources, while simultaneously meeting per-source energy constraints is non-convex. Nonetheless, we devise a solution, i.e., a choice of the mean sleeping durations for each source. We then show that in the regime for which the sensing time is negligible compared to the packet transmission time, the proposed solution is near-optimal (Theorem \ref{thm_case_sum_bi_ge_1} and Theorem \ref{thm_case_sum_bi_le_1}). Our near-optimality results hold for any generally distributed packet transmission times.

\item  We propose an algorithm that can be easily implemented in many industrial control systems. In particular, we are able to represent our solution in a form that requires the knowledge of two universal parameters to obtain its value. These universal parameters are functions of network parameters, i.e., the mean packet transmission times, carrier sensing time, energy constraint information, weight of each source. Hence, the proposed algorithm requires the source nodes to share the network parameters with an AP that is connected to all the sources. Once the AP obtains this information, it calculates these universal parameters and broadcasts them to all the sources. Each source, thereafter, uses these universal parameters to compute its mean sleeping times.

\item Finally, in the limiting scenario, when the ratio between the sensing time and the packet transmission time goes to zero, we show that the age performance of our proposed algorithm is as good as that of the optimal synchronized scheduler (e.g., for time-slotted systems), in which the time overhead needed for coordinating different sources with random packet sizes are omitted (Corollary \ref{corollary_centeralized_scheduler}).  



\end{itemize}

\section{Model and Formulation}\label{sysmod}
\subsection{Network Model and Sleep-wake Scheduling}
Consider a wireless network composed of $M$ source nodes observing time-varying processes. Sources generate update packets and communicate them to an access point (AP) over the same spectrum band. If multiple sources transmit packets simultaneously, a packet collision occurs and the corresponding packet transmissions fail. 


We assume that the sources use a sleep-wake scheduling scheme to access the shared channel, where the sources switch between a sleep mode and transmission mode over time, according the following rules: Upon waking from the sleep mode, a source first performs carrier sensing to check whether the channel is occupied by another source, as illustrated in Figure \ref{sleep_wake_cycle}. We assume that the sources are within the hearing range of each other. The time duration of carrier sensing is denoted as $t_s$, which is sufficiently long to ensure a high sensing accuracy.
If the channel is sensed to be busy, the source enters the sleep mode directly; otherwise,  the source generates and transmits an update packet over the channel. Upon completing a packet transmission, the source goes back to the sleep mode. 

In the above sleep-wake scheduling scheme, if two sources start transmitting within a duration of $t_s$, then they may not be able to sense the transmission of each other. In order to obtain a robust system design, we consider that they cannot detect each other's transmission in this case and a collision occurs. A feedback is sent back to the sources to indicate the outcome of their transmissions (successful transmission or collision).

A \emph{sleep-wake} cycle, or simply a \emph{cycle}, is defined as the time period between the ends of two successive packet transmission or collision events in the network. Each cycle consists of an idle period before a transmission/collision event. As depicted in Figure \ref{sleep_wake_cycle}, the packet transmissions in Cycle 1-2 are successful, but a collision occurs in Cycle 3 because Sources 1 and 2 wake up within a short duration $t_s$.   

 We use $T_i,i\in \{1,2,\ldots\}$ to represent the time incurred during the $i$-th packet transmission or collision event over time, which includes propagation and feedback delays. For example, in Figure~\ref{sleep_wake_cycle}, $T_1$ is the duration of the packet transmission event by Source~1, while $T_3$ is the duration of the collision event between Source~1 and 2. We assume that the distribution of the time spent during transmission or collision is the same.   In Section \ref{ns3sim}, we show that this assumption has a negligible impact on the performance of the proposed algorithm. The transmission/collision times $T_i$'s are i.i.d. across time and sources, and are generally distributed. In the rest of the paper, we omit the subscript $i$ of $T_i$ for simplicity, and use $T$ to denote the transmission/collision time, which is assumed to have a finite mean, i.e., $E[T]<\infty$. The sleep periods of source~$l$ are exponentially distributed random variables with mean value $\mathbb{E}[T]/r_l$ and are independent across sources and \emph{i.i.d.} across time. Here, the sleep period parameter $r_l$ has been normalized by the mean transmission time $\mathbb{E}[T]$. Let $\mathbf{r}=(r_1, \ldots, r_M)$ be the vector comprising of these sleep period parameters.

\subsection{Total Weighted Average Peak Age }\label{objective_function1}
Let $\alpha_l$ be the probability of the event that the source~$l$ obtains channel access and successfully transmits a packet within a cycle. It follows from \cite{chen2013life} that $\alpha_l$ is given by
\begin{align}\label{access_prob_in_agiven_cycle}
\alpha_l=\frac{r_l e^{r_l\frac{t_s}{\mathbb{E}[T]}}}{e^{\sum_{i=1}^Mr_i\frac{t_s}{\mathbb{E}[T]}}\sum_{i=1}^Mr_i}.
\end{align}
\ifreport In order to keep the discussion self-contained, we derive the above expression in  Appendix \ref{Appendix_A'} \else For the sake of completeness, we derive the above expression in our technical report \cite[Appendix A]{bedewy2019optimal}\fi. Let $N_l$ denote the total number of cycles between two successful transmissions of source~$l$. Now, if the probability that source~$l$ obtains channel access and transmits successfully in a given cycle is $\alpha_l$, and $1-\alpha_l$ otherwise, then $N_l$ is geometrically distributed with mean $\frac{1}{\alpha_l}$. Thus, we get
\begin{equation}\label{mean_nf}
\mathbb{E}[N_l]=\frac{e^{\sum_{i=1}^Mr_i\frac{t_s}{\mathbb{E}[T]}}\sum_{i=1}^Mr_i}{r_l e^{r_l\frac{t_s}{\mathbb{E}[T]}}}.
\end{equation}

\begin{figure}[h]
\includegraphics[scale=0.22]{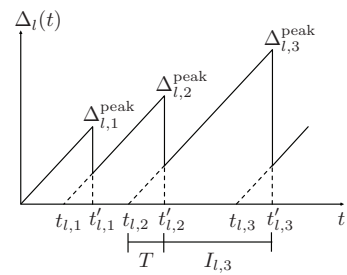}
\centering
\captionsetup{justification=justified}
\caption{The age evolution of source~$l$ ($\Delta_l(t)$). 
}
\label{age_proc}
\end{figure}
 Let $U_l(t)$ represent the generation time of the most recently delivered packet from source~$l$ by time $t$. Then, the \emph{age of information}, or simply the \emph{age}, of source~$l$ is defined as \cite{KaulYatesGruteser-Infocom2012} 
\begin{equation}
\Delta_l(t)=t-U_l(t). 
\end{equation}
As shown in Figure \ref{age_proc}, the age increases linearly with $t$, but is reset to a smaller value upon the delivery of a fresher packet. Since a fresh update packet is delivered each time a source obtains channel access and completes transmission, the AoI of source~$l$ is reset after a random number of $N_l$ cycles. We suppose that the age $\Delta_l(t)$ is right-continuous. 
\ifreport 
Observe that a small age $\Delta_l(t)$ indicates that the AP has a fresh status update packet that was generated at source~$l$ recently. Hence, it is desirable to keep $\Delta_l(t)$ small for all the sources.
\fi

We begin by introducing some notations and definitions. We use $t_{l,i}$ and $t_{l,i}'$ to denote the generation and delivery times, respectively, of the $i$-th delivered packet from source~$l$, where we have $t_{l,i}'-t_{l,i}=T$.\footnote{A packet of a particular source is deemed delivered when the source receives the feedback.}  Let $I_{l,i}=t_{l,i}'-t_{l,i-1}'$ denote the $i$-th inter-departure time of source~$l$, where we have $\mathbb{E}[I_{l,i}]=\mathbb{E}[I_l]\leq \infty$ for all $i$. 
The $i$-th peak age of source~$l$, denoted by $\Delta^{\text{peak}}_{l,i}$, is defined as the AoI of source~$l$ right before the $i$-th packet delivery from source~$l$, i.e., we have 
\begin{align}
\Delta^{\text{peak}}_{l,i}=\Delta_l(t_{l,i}'^-),
\end{align}
where $t_{l,i}'^-$ is the time instant just before the delivery time $t_{l,i}'$. This is shown in Figure \ref{age_proc}. The average peak age metric provides information regarding the worst case age with the advantage of having a simpler formulation than the average age metric \cite{CostaCodreanuEphremides_TIT}. Thus, it is suitable for applications that have an upper bound restriction on AoI. One can observe from Figure \ref{age_proc}, that the peak age can be expressed as \cite{CostaCodreanuEphremides_TIT}
\begin{align}
\Delta^{\text{peak}}_{l,i}=I_{l,i}+T.
\end{align}
Hence, the average peak age of source~$l$ is given by 
 \begin{equation}\label{peak_age1}
\mathbb{E}[\Delta^{\text{peak}}_{l,i}]=\mathbb{E}[I_l]+\mathbb{E}[T].
\end{equation}
We now derive an expression for $\mathbb{E}[I_l]$. An inter-departure time duration of a particular source is composed of multiple consecutive sleep-wake cycles, see Figure \ref{sleep_wake_cycle}. With a slight abuse of notation, we let $\textbf{cycle}_{l,i}$ denote the duration of the $i$-th sleep-wake cycle after a successful transmission of source~$l$. Hence, we have
\begin{equation}
\mathbb{E}[I_l]=\mathbb{E}\left[\sum_{i=1}^{N_l}\textbf{cycle}_{l,i}\right].
\end{equation}
Note that $\textbf{cycle}_{l,i}$'s are i.i.d. across time. Moreover, since $\mathbb{P}(N_l=n)$ depends only on the history, $N_l$ is a stopping time \cite{shiryaev2007optimal}. Hence, it follows from Wald's identity \cite{wald1973sequential} that
\begin{equation}\label{inter_dep}
\mathbb{E}[I_l]=\mathbb{E}[N_l]\mathbb{E}[\textbf{cycle}],
\end{equation}
where $\mathbb{E}[\textbf{cycle}]$ is the mean duration of a sleep-wake cycle. Each cycle consists of an idle period and a transmission/collision time, see Figure \ref{sleep_wake_cycle}. Using the memoryless property of exponential distribution, we observe that the idle period is the minimum of exponential random variables. Thus, it can be shown that the idle period in each cycle is exponentially distributed with mean value equal to $\mathbb{E}[T]/\sum_{i=1}^Mr_i$, where $\mathbb{E}[T]/r_l$ is the mean of sleep periods of source~$l$. Hence, we have
\begin{equation}\label{swc_mean}
\mathbb{E}[\textbf{cycle}]=\frac{\mathbb{E}[T]}{\sum_{i=1}^Mr_i}+\mathbb{E}[T].
\end{equation}
Substituting the expressions for $\mathbb{E}[N_l]$ and $\mathbb{E}[\textbf{cycle}]$  from \eqref{mean_nf} and  \eqref{swc_mean}, respectively, into \eqref{inter_dep}, and then into \eqref{peak_age1}, we obtain
\begin{equation}
\begin{split}
\mathbb{E}[\Delta^{\text{peak}}_{l,i}]=&\frac{e^{-r_l\frac{t_s}{\mathbb{E}[T]}}\mathbb{E}[T]}{r_l}e^{\sum_{i=1}^Mr_i\frac{t_s}{\mathbb{E}[T]}}\left(1+\sum_{i=1}^Mr_i\right)+\mathbb{E}[T].
\end{split}
\end{equation}
In this paper, we aim to minimize the total weighted average peak age, which is given by
\begin{equation}\label{t_avg_peak_age}
\begin{split}
\!\!\sum_{l=1}^M\frac{w_le^{-r_l\frac{t_s}{\mathbb{E}[T]}}\mathbb{E}[T]}{r_l}e^{\sum_{i=1}^Mr_i\frac{t_s}{\mathbb{E}[T]}}\left(1\!+\!\sum_{i=1}^Mr_i\right)+\sum_{l=1}^Mw_l\mathbb{E}[T],
\end{split}
\end{equation}
where $w_l>0$ is the weight of source~$l$. The weights here enable us to prioritize the sources according to their relative importance \cite{talak2018distributed,talak2018optimizing2}.

\subsection{Energy Constraint}
Each source is equipped with a battery that can possibly be recharged by a renewable energy source, such as solar. The  energy constraint on source~$l$ is described by the following parameters:
a) Initial battery level $B_{l}$, which denotes the initial amount of energy stored in its battery, b) Target lifetime $D_{l}$, which is the minimum time-duration that the source~$l$ should be active before its battery is depleted, c) Average energy replenishment rate\footnote{It is assumed that $R_l$ is either known, or it can be estimated accurately.} $R_l$, which is the rate at which the battery of source~$l$ receives energy from its energy source. Observe that if source~$l$ does not have access to an energy source, then we have $R_l=0$.

In typical wireless sensor networks, sources have a much smaller power consumption in the sleep mode than in the transmission mode. For example, the power consumption in the sleep mode is 15 $\mu$W while the power consumption in the transmission mode is 24.75 mW \cite{sleep_radio_enrgy_cons}. Motivated by this, we assume that the energy dissipation during sleep modes is negligible as compared to the power consumption in the transmission mode. Moreover, we assume that the sensing time duration $t_s$ is very short as compared to the transmission time and hence neglect the energy consumed while sensing the channel. In Section \ref{ns3sim}, we show that these assumptions have a negligible effect on the performance of the proposed algorithm. Under these assumptions, the amount of energy used by a source is  equal to the amount of energy consumed in transmissions. Note that the power consumed in packet transmission is equal to the sum of energy consumed while using radio signal during packet transmission, and the power used for receiving feedback.
 
The maximum allowable energy consumption rate for transmissions, denoted by $E_{\text{con},l}$, is given by
\begin{equation}
E_{\text{con},l}=\frac{B_l}{D_l}+R_l, ~\forall l.
\end{equation}
Then,  for source $l$ to achieve its target lifetime, $D_l$, the actual energy consumption rate of source~$l$, $E_l$, must satisfy
\begin{equation}\label{energy_const_1}
E_l\leq E_{\text{con},l}, ~\forall l.
\end{equation}
For the sleep-wake mechanism under consideration, it has been shown in \cite{chen2013life} that the total fraction of time in which source~$l$ transmits update packets is given by
\begin{equation}\label{sigma_l}
\sigma_l=\frac{[1-e^{-r_l\frac{t_s}{\mathbb{E}[T]}}]\sum_{i=1}^Mr_i+r_le^{-r_l\frac{t_s}{\mathbb{E}[T]}}}{\sum_{i=1}^Mr_i+1}.
\end{equation}
For the sake of completeness, the derivation of $\sigma_l$ is discussed in \ifreport Appendix \ref{Appendix_A''} \else our technical report \cite[Appendix B]{bedewy2019optimal}\fi. If $E_{\text{avg},l}$ is the average energy consumption rate of source~$l$ in the transmission mode, then we have
\begin{equation}
E_l=\sigma_lE_{\text{avg},l},~ \forall l.
\end{equation}
Define $b_l\triangleq E_{\text{con},l}/E_{\text{avg},l}$ as the target energy efficiency of source~$l$. Then, the energy constraints in \eqref{energy_const_1} can be rewritten as
\begin{equation}\label{energy_const_2}
\sigma_l=\frac{[1-e^{-r_l\frac{t_s}{\mathbb{E}[T]}}]\sum_{i=1}^Mr_i+r_le^{-r_l\frac{t_s}{\mathbb{E}[T]}}}{\sum_{i=1}^Mr_i+1}\leq b_l, ~\forall l.
\end{equation}
Observe that if $b_l\geq 1$, then constraint \eqref{energy_const_2} is always satisfied. 

\subsection{Problem Formulation}
Our goal is to design $\mathbf{r}$ in order to minimize the total weighted average peak age in \eqref{t_avg_peak_age}, while simultaneously ensuring that the energy constraints \eqref{energy_const_2} are satisfied. After normalizing the total weighted average peak age in \eqref{t_avg_peak_age} by $\mathbb{E}[T]$, our goal can be cast as the following optimization problem: 
(Problem \textbf{1})
\begin{align}\label{problem1}
\!\!\!\!\!\!\!\!\!\!\!\!\!\!\!\!\!\!\!\!\!\!\!\!\!\!\!\!\!\!\!\!\begin{split}
\bar{\Delta}^{\text{peak}}_{\text{opt}}\triangleq\min_{r_l>0}& \sum_{l=1}^M\frac{w_le^{-r_l\frac{t_s}{\mathbb{E}[T]}}}{r_l}e^{\sum_{i=1}^Mr_i\frac{t_s}{\mathbb{E}[T]}}\left(1+\sum_{i=1}^Mr_i\right)+\sum_{l=1}^Mw_l\\
\textbf{s.t.}~&\frac{[1-e^{-r_l\frac{t_s}{\mathbb{E}[T]}}]\sum_{i=1}^Mr_i+r_le^{-r_l\frac{t_s}{\mathbb{E}[T]}}}{\sum_{i=1}^Mr_i+1}\leq b_l,\forall l,
\end{split}\!\!\!\!\!\!\!\!\!\!\!\!\!\!\!\!\!\!\!\!\!\!\!\!
\end{align}
where $\bar{\Delta}^{\text{peak}}_{\text{opt}}$ is the optimal objective value of Problem \textbf{1}. We will use $\bar{\Delta}^{\text{peak}}(\mathbf{r})$ to denote the objective function of Problem \textbf{1} for given sleeping period parameters $\mathbf{r}$. One can notice from \eqref{problem1} that the optimal sleeping period parameters depends on the sensing time $t_s$ and the mean transmission time $\mathbb{E}[T]$ only through their ratio $t_s/\mathbb{E}[T]$. This insight plays a crucial role in subsequent analysis of Problem \textbf{1}.


\section{Main Results}\label{main_result}
We can observe that Problem \textbf{1} can be solved by resorting to nested convex optimization, if the sensing time is zero. However, 
Problem \textbf{1} becomes non-convex for positive sensing times. Hence, it is challenging to solve for optimal $\mathbf{r}$. In this section we will propose a low-complexity closed-form solution which is shown to be near-optimal when the sensing time is small as compared with the transmission time. Our solution is developed by considering the following two regimes separately: (i) \emph{Energy-adequate regime} denoted as $\sum_{i=1}^M b_i \geq 1$, where the condition $\sum_{i=1}^M b_i \geq 1$ means that the sources have a sufficient amount of total energy to ensure that at least one source is awake at any time, (ii) \emph{Energy-scarce regime} represented by $\sum_{i=1}^M b_i < 1$, which indicates that the sources have to sleep for some time to meet the sources' energy constraints.
\subsection{Energy-adequate Regime}
In the energy-adequate regime $\sum_{i=1}^M b_i \geq 1$, our solution $\mathbf{r}^{\star}:= (r^{\star}_1,\ldots,r^{\star}_M)$ is given as 
\begin{equation}\label{r_f_b_gr_1}
r^{\star}_l=\min\{b_l,\beta^{\star}\sqrt{w_l}\} x^{\star},\forall l,
\end{equation}
where $x^\star$ and $\beta^\star$ are expressed in terms of the the parameters $\{b_i,w_i\}_{i=1}^{M},t_s/\mathbb{E}[T] $
as follows:
\begin{equation}\label{x*}
x^{\star}=\frac{-1}{2}+\sqrt{\frac{1}{4}+\frac{\mathbb{E}[T]}{t_s}},
\end{equation}
and $\beta^{\star}$ is the root of 
\begin{equation}\label{condition_beta}
\sum_{i=1}^M\min\{b_i,\beta^{\star}\sqrt{w_i}\}\ =1.
\end{equation}
The performance of the above solution $\mathbf{r}^\star$ is manifested in the following theorem:
%
%
%
\begin{theorem}[\textbf{Near-optimality}]\label{thm_case_sum_bi_ge_1}
If $\sum_{i=1}^Mb_i\ge 1$, then the solution $\mathbf{r}^{\star}$ given by \eqref{r_f_b_gr_1} - \eqref{condition_beta} is near-optimal for solving \eqref{problem1} when $t_s/E[T]$ is sufficiently small, in the following sense:\footnote{We use the standard order notation: $f(h)=O(g(h))$ means $z_1\le \lim_{h\to 0}f(h)/g(h)\le z_2$ for some constants $z_1>0$ and $z_2>0$, while $f(h)=o(g(h))$ means $\lim_{h\to 0}f(h)/g(h)= 0$.}
\begin{align}\label{sub_opt_gap_eq}
\vert\bar{\Delta}^{\text{peak}}(\mathbf{r}^{\star})-\bar{\Delta}^{\text{peak}}_{\text{opt}}\vert \leq 2\sqrt{\frac{t_s}{\mathbb{E}[T]}}C_1\!+\!o\left(\sqrt{\frac{t_s}{\mathbb{E}[T]}}\right),
\end{align}
where
\begin{align}\label{eq23thm3_1}
C_1=\sum_{i=1}^M\frac{w_i}{\min\{b_i,\beta^{\star}\sqrt{w_i}\}}.
\end{align}
\end{theorem}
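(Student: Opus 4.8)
## Proof Proposal

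\textbf{Overall strategy.} The plan is to upper-bound $\bar{\Delta}^{\text{peak}}(\mathbf{r}^{\star})$ and lower-bound $\bar{\Delta}^{\text{peak}}_{\text{opt}}$ separately, and show both bounds agree, up to $O(\sqrt{t_s/\mathbb{E}[T]})$, with a common reference value $\bar{\Delta}^{\text{peak}}_0$ — namely the optimal objective of the \emph{idealized} problem obtained by setting $t_s = 0$ in \eqref{problem1}. When $t_s = 0$, Problem \textbf{1} becomes (as noted in the text) solvable by nested convex optimization: the objective reduces to $\sum_l \frac{w_l}{r_l}\bigl(1 + \sum_i r_i\bigr) + \sum_l w_l$ with constraints $\frac{r_l}{1+\sum_i r_i}\le b_l$, and one checks that the structure $r_l = \min\{b_l,\beta\sqrt{w_l}\}\cdot(\text{scaling})$ emerges from the KKT conditions — with the scaling sent to $\infty$ in the zero-sensing limit, consistent with $x^\star\to\infty$ as $t_s/\mathbb{E}[T]\to 0$ in \eqref{x*}. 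I would first carry out this zero-sensing analysis carefully to pin down $\bar{\Delta}^{\text{peak}}_0$ and identify that $C_1 = \sum_i \frac{w_i}{\min\{b_i,\beta^\star\sqrt{w_i}\}}$ is exactly the coefficient that shows up.

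\textbf{Upper bound on $\bar{\Delta}^{\text{peak}}(\mathbf{r}^{\star})$.} Write $h = t_s/\mathbb{E}[T]$ and $m^\star_l = \min\{b_l,\beta^\star\sqrt{w_l}\}$, so $r^\star_l = m^\star_l x^\star$ and, by \eqref{condition_beta}, $\sum_i m^\star_i = 1$, hence $\sum_i r^\star_i = x^\star$. From \eqref{x*}, $x^\star$ satisfies $x^\star(x^\star+1) = 1/h$, i.e. $h\,x^\star(1+x^\star) = 1$; also $x^\star = h^{-1/2} + O(1)$ as $h\to 0$. Substituting into the objective of \eqref{problem1}: the exponential factor is $e^{-r^\star_l h}e^{(\sum_i r^\star_i)h} = e^{h x^\star(1 - m^\star_l)}$, which, since $hx^\star = O(\sqrt h)$ and $1-m^\star_l\in[0,1)$, equals $1 + O(\sqrt h)$ uniformly in $l$. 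Then
\begin{align}
\bar{\Delta}^{\text{peak}}(\mathbf{r}^{\star})
&= \sum_{l=1}^M \frac{w_l}{m^\star_l x^\star}\,e^{h x^\star(1-m^\star_l)}\bigl(1 + x^\star\bigr) + \sum_{l=1}^M w_l \notag\\
&= \sum_{l=1}^M \frac{w_l}{m^\star_l}\cdot\frac{1+x^\star}{x^\star}\bigl(1 + O(\sqrt h)\bigr) + \sum_{l=1}^M w_l.\notag
\end{align}
Since $\frac{1+x^\star}{x^\star} = 1 + \frac{1}{x^\star} = 1 + \sqrt h + o(\sqrt h)$, this gives $\bar{\Delta}^{\text{peak}}(\mathbf{r}^{\star}) = C_1 + \sum_l w_l + \sqrt h\, C_1 + o(\sqrt h) + (\text{the }O(\sqrt h)\text{ cross terms})$. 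I would track the $O(\sqrt h)$ terms a bit more carefully to confirm the leading correction is exactly $2\sqrt h\, C_1$ when combined with the matching lower-bound slack (one factor of $\sqrt h\,C_1$ from $\frac{1+x^\star}{x^\star}$, one more from the $e^{hx^\star(1-m^\star_l)}$ expansion summing against $\frac{w_l}{m^\star_l}$, keeping in mind $\sum_l \frac{w_l}{m^\star_l}(1-m^\star_l)\le C_1$). Crucially, one also verifies $\mathbf{r}^\star$ is feasible: plugging $r^\star_l = m^\star_l x^\star$ into $\sigma_l$ from \eqref{energy_const_2} and using $1-e^{-r^\star_l h}\le r^\star_l h = m^\star_l h x^\star$, one bounds $\sigma_l \le \frac{m^\star_l h x^\star \cdot x^\star + m^\star_l x^\star}{x^\star + 1} = m^\star_l\cdot\frac{h x^\star{}^2 + x^\star}{x^\star+1} = m^\star_l$ (using $hx^\star{}^2 + x^\star = x^\star(hx^\star+1)$ and $hx^\star(1+x^\star)=1\Rightarrow hx^\star{}^2 = 1 - hx^\star$, so the numerator is $x^\star + 1 - hx^\star \le x^\star+1$), hence $\sigma_l \le m^\star_l \le b_l$.

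\textbf{Lower bound on $\bar{\Delta}^{\text{peak}}_{\text{opt}}$.} Here is the main obstacle: Problem \textbf{1} is non-convex, so I cannot simply invoke convex duality. Instead I would relax the feasible set to obtain a tractable lower bound. Drop the $e^{-r_l h}\le 1$ factors in the energy constraint's numerator where that weakens it, and use $e^{-r_l h}\ge 1 - r_l h$ and $e^{(\sum_i r_i)h}\ge 1$ in the objective, to get $\bar{\Delta}^{\text{peak}}(\mathbf{r}) \ge \sum_l \frac{w_l(1-r_lh)}{r_l}(1+\sum_i r_i) + \sum_l w_l$ over the relaxed region $\{\mathbf{r} : \sigma_l \le b_l\}$, which I would in turn enlarge to a convex set (e.g. $\frac{r_l}{1+\sum_i r_i}\le b_l$, obtained by dropping the $[1-e^{-r_lh}]\sum_i r_i$ term or bounding it). On this convex relaxation the minimization is a perturbed version of the $t_s=0$ problem; solving it via KKT (Lagrange multipliers $\lambda_l\ge 0$ for the relaxed energy constraints) reproduces the threshold structure $r_l \propto \min\{b_l,\beta\sqrt{w_l}\}$ and yields optimal value $C_1 + \sum_l w_l - O(\sqrt h)$ — the matching lower bound. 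The technical care is in (a) confirming the relaxed convex problem's optimum is indeed $\bar{\Delta}^{\text{peak}}_0 - O(\sqrt h)$ with the same $C_1$ coefficient, and (b) controlling the $(1-r_lh)$ perturbation — since at the relaxed optimum $r_l = \Theta(h^{-1/2})$, we have $r_lh = \Theta(\sqrt h)$, so the perturbation is genuinely $O(\sqrt h)$, not larger. Combining the two bounds, $\bigl|\bar{\Delta}^{\text{peak}}(\mathbf{r}^{\star}) - \bar{\Delta}^{\text{peak}}_{\text{opt}}\bigr| \le 2\sqrt h\, C_1 + o(\sqrt h)$, which is \eqref{sub_opt_gap_eq}.

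\textbf{Remarks on the hard part.} The genuinely delicate step is the lower bound, because non-convexity forbids a clean duality argument; everything hinges on choosing a convex relaxation that is tight enough to preserve the $C_1$ constant yet loose enough to be solvable in closed form. A secondary subtlety is bookkeeping the two separate sources of the $\sqrt h$ correction so that the constant in front is exactly $2$ and not, say, $1$ or $3$; I expect this to follow from the identity $hx^\star(1+x^\star)=1$ together with $\sum_i m^\star_i = 1$, but it warrants explicit verification rather than hand-waving.
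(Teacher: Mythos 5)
Your overall route is the same as the paper's: (i) verify feasibility of $\mathbf{r}^\star$ and substitute it into the objective for an upper bound, (ii) relax the energy constraints to the convex set $r_l\le b_l(1+\sum_i r_i)$ and lower-bound the objective to get a tractable convex problem whose value lower-bounds $\bar{\Delta}^{\text{peak}}_{\text{opt}}$, and (iii) Taylor-expand the gap using $x^\star(x^\star+1)=\mathbb{E}[T]/t_s$, $1/x^\star=\sqrt{h}+o(\sqrt{h})$ and $hx^\star=\sqrt{h}+o(\sqrt{h})$. Your feasibility computation and your upper-bound bookkeeping (the two $\sqrt{h}\,C_1$ contributions, one from $1+1/x^\star$ and one from the exponential, using $\sum_i m_i^\star=1$) are correct and in fact slightly sharper than the paper's, which simply bounds $e^{-m_l^\star x^\star h}\le 1$ to get the factor $e^{x^\star h}(1+1/x^\star)$ uniformly.

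The one genuine flaw is in the lower bound as you wrote it. You propose the pointwise bound $e^{-r_l h}\ge 1-r_l h$, $e^{(\sum_i r_i)h}\ge 1$, giving the surrogate objective $\sum_l \frac{w_l(1-r_lh)}{r_l}(1+\sum_i r_i)+\sum_l w_l$ to be minimized over the relaxed convex set. But when $\sum_i b_i\ge 1$ that set is unbounded (e.g.\ $r_l=b_l y$ is feasible for all $y>0$), and along such rays the term $\frac{w_l(1-r_lh)}{r_l}(1+\sum_i r_i)\approx \frac{w_l}{b_l}(1-b_lyh)\sum_i b_i\to-\infty$, so your surrogate problem is unbounded below and the resulting lower bound is vacuous; the KKT analysis you anticipate would not return $C_1+\sum_l w_l-O(\sqrt h)$. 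The fix is the one the paper uses: since $\sum_i r_i\ge r_l$, bound the product jointly, $e^{-r_lh}e^{(\sum_i r_i)h}=e^{(\sum_i r_i-r_l)h}\ge 1$, which leaves the surrogate $\sum_l\frac{w_l}{r_l}(1+\sum_i r_i)+\sum_l w_l$. This is bounded below on the relaxed set, and the nested convex optimization (inner KKT in $\mathbf{r}$ for fixed $y=1+\sum_i r_i$, outer minimization over $y$, with infimum at $y\to\infty$) gives exactly $C_1+\sum_l w_l$ with no $O(\sqrt h)$ loss, so the entire $2\sqrt{h}\,C_1$ gap is absorbed by the upper bound. With that substitution your argument goes through and matches the paper's.
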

 \ifreport
\begin{proof}
See Section \ref{proof_case_sum_bi_ge_1}. 
\end{proof}
\else 
\begin{proof}
See Section \ref{proof1}.
\end{proof}
\fi
As a result of Theorem \ref{thm_case_sum_bi_ge_1}, we can obtain the following corollary:
\begin{corollary}[\textbf{Asymptotic optimality}]\label{cor1}
If $\sum_{i=1}^Mb_i\ge 1$, then the solution $\mathbf{r}^{\star}$ given by \eqref{r_f_b_gr_1} - \eqref{condition_beta} is asymptotically optimal for the Problem \textbf{1} as $t_s/\mathbb{E}[T]\rightarrow 0$, i.e., 
\begin{align}\label{zero_gab_result}
\lim_{\frac{t_s}{\mathbb{E}[T]}\rightarrow 0} \vert\bar{\Delta}^{\text{peak}}(\mathbf{r}^{\star})-\bar{\Delta}^{\text{peak}}_{\text{opt}}\vert=0.
\end{align}
Moreover, the asymptotic optimal value of Problem \textbf{1} as $t_s/\mathbb{E}[T]\to 0$ is
  \begin{equation}\label{asymptotic_value_final}
\lim_{\frac{t_s}{\mathbb{E}[T]}\rightarrow 0} \bar{\Delta}^{\text{peak}}_{\text{opt}}=\sum_{i=1}^M\left[\frac{w_i}{\min\{b_i,\beta^{\star}\sqrt{w_i}\}}+w_i\right].
\end{equation}
 \end{corollary}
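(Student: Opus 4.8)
The plan is to obtain this corollary as an essentially immediate consequence of Theorem~\ref{thm_case_sum_bi_ge_1}. Write $h\triangleq t_s/\mathbb{E}[T]$ throughout. For the first claim \eqref{zero_gab_result}, the key observation is that the constant $C_1$ in \eqref{eq23thm3_1} is \emph{independent of $h$}: it depends only on the network parameters $\{b_i,w_i\}_{i=1}^M$ and on $\beta^\star$, and $\beta^\star$ is determined by \eqref{condition_beta}, whose left-hand side involves no $t_s$ or $\mathbb{E}[T]$. Since $\min\{b_i,\beta^\star\sqrt{w_i}\}>0$ for every $i$, the quantity $C_1$ is a fixed finite constant. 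Letting $h\to 0$ in the bound \eqref{sub_opt_gap_eq} of Theorem~\ref{thm_case_sum_bi_ge_1} then gives $\limsup_{h\to 0}\,\lvert\bar{\Delta}^{\text{peak}}(\mathbf{r}^{\star})-\bar{\Delta}^{\text{peak}}_{\text{opt}}\rvert\le\lim_{h\to 0}\bigl(2\sqrt{h}\,C_1+o(\sqrt{h})\bigr)=0$, which is exactly \eqref{zero_gab_result}.

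For the second claim I would evaluate the objective of Problem~\textbf{1} at $\mathbf{r}^\star$ in closed form and then take $h\to 0$. Put $c_l\triangleq\min\{b_l,\beta^\star\sqrt{w_l}\}$, so $r^\star_l=c_l x^\star$ by \eqref{r_f_b_gr_1}, and, using the normalization \eqref{condition_beta}, $\sum_{i=1}^M r^\star_i=x^\star\sum_{i=1}^M c_i=x^\star$. From the definition \eqref{x*} one has the algebraic identity $(x^\star)^2+x^\star=\mathbb{E}[T]/t_s=1/h$, equivalently $x^\star h=1/(x^\star+1)$. Substituting $r^\star_l=c_lx^\star$, $\sum_i r^\star_i=x^\star$, and $x^\star h=1/(x^\star+1)$ into the objective of \eqref{problem1} collapses it to
\begin{align*}
\bar{\Delta}^{\text{peak}}(\mathbf{r}^\star)=\sum_{l=1}^M\frac{w_l}{c_l}\Bigl(1+\frac{1}{x^\star}\Bigr)\,e^{(1-c_l)/(x^\star+1)}+\sum_{l=1}^M w_l .
\end{align*}
Since $x^\star\to\infty$ as $h\to 0$ (indeed $x^\star\sqrt{h}\to 1$), every factor $1+1/x^\star\to 1$ and $e^{(1-c_l)/(x^\star+1)}\to 1$, so $\bar{\Delta}^{\text{peak}}(\mathbf{r}^\star)\to\sum_{l=1}^M\bigl[w_l/c_l+w_l\bigr]$, which is the right-hand side of \eqref{asymptotic_value_final}. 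Combining this with \eqref{zero_gab_result} via the triangle inequality, $\bar{\Delta}^{\text{peak}}_{\text{opt}}$ converges to the same limit, establishing \eqref{asymptotic_value_final}.

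There is no genuine obstacle here; the corollary is a routine specialization of the theorem. The only two points that need care are: (i) checking that $\beta^\star$—and therefore $C_1$—does not depend on the perturbation parameter $h$, so that the bound \eqref{sub_opt_gap_eq} truly vanishes in the limit; and (ii) performing the simplification of $\bar{\Delta}^{\text{peak}}(\mathbf{r}^\star)$ using the defining relation $(x^\star)^2+x^\star=1/h$ together with $\sum_i c_i=1$ \emph{before} passing to the limit, rather than attempting a term-by-term limit in the original unsimplified expression (where the factor $e^{\sum_i r_i\, h}$ and the $1/r_l$ terms would otherwise both diverge/vanish in a way that obscures the finite limit).
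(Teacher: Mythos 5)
Your proof is correct, and the first claim \eqref{zero_gab_result} is argued exactly as in the paper: the constant $C_1$ is independent of $t_s/\mathbb{E}[T]$ because $\beta^\star$ is determined by \eqref{condition_beta} alone, so the bound \eqref{sub_opt_gap_eq} vanishes in the limit. For the second claim \eqref{asymptotic_value_final} you take a slightly different route from the paper. The paper sandwiches $\bar{\Delta}^{\text{peak}}_{\text{opt}}$ between the explicit upper bound of Lemma~\ref{lemma_upper_bound_sum_i_b_i_ge1} and the lower bound of Lemma~\ref{lemma_bounds} and observes that both converge to $\sum_{i}\bigl[w_i/\min\{b_i,\beta^\star\sqrt{w_i}\}+w_i\bigr]$, so the squeeze identifies the limit. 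You instead evaluate $\bar{\Delta}^{\text{peak}}(\mathbf{r}^\star)$ exactly, using $\sum_i r_i^\star=x^\star$ (from \eqref{condition_beta}) and the identity $(x^\star)^2+x^\star=\mathbb{E}[T]/t_s$ to collapse the objective to $\sum_l (w_l/c_l)\bigl(1+1/x^\star\bigr)e^{(1-c_l)/(x^\star+1)}+\sum_l w_l$, pass to the limit, and then transfer the limit to $\bar{\Delta}^{\text{peak}}_{\text{opt}}$ via \eqref{zero_gab_result} and the triangle inequality; your algebra checks out. The trade-off is minor: your version gives the exact finite-$t_s$ value of the achieved objective (strictly sharper than the paper's upper bound, which discards the factor $e^{-c_l x^\star t_s/\mathbb{E}[T]}\le 1$) and avoids re-invoking the lower-bound lemma for the second claim, while the paper's squeeze argument reuses machinery already in place for Theorem~\ref{thm_case_sum_bi_ge_1} and requires no additional computation. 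Both are complete proofs.
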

 \ifreport
\begin{proof}
See Section \ref{proof_case_sum_bi_ge_1}. 
\end{proof}
\else 
\begin{proof}
See Section \ref{proof1}.
\end{proof}
\fi 
 \subsection{Energy-scarce Regime}\label{case_sum_bi_le_1}
Now, we present a solution to Problem \textbf{1} and show it is near-optimal in energy-scarce regime $\sum_{i=1}^Mb_i< 1$. The solution $\mathbf{r}^\star$ of the energy-scarce regime is again given by \eqref{r_f_b_gr_1}, where $x^\star$ and $\beta^\star$ are determined as
\begin{equation}\label{condition_x*_beta*}
x^{\star}=\frac{\min_l c_l}{1-\sum_{i=1}^Mb_i},~\beta^{\star}=\sum_{i=1}^M\frac{1}{\sqrt{w_i}},
\end{equation}
and 
\begin{align}
\!\!\!\!\!\!c_l =&\frac{2b_l\left(1-\sum_{i=1}^Mb_i\right)^2}{Q},\label{feasible_factor} \\
\begin{split}
\!\!\!\!\!\!Q= &b_l\left(\!1\!-\!\sum_{i=1}^Mb_i\!\right)^2\\
&\!\!\!\!\!\!+\!\sqrt{b_l^2\left(1\!-\!\sum_{i=1}^Mb_i\right)^4\!\!\!+4b_l^2\left(1\!-\!\sum_{i=1}^Mb_i\right)^2\!\left(\sum_{i=1}^Mb_i\!-\!b_l\right)\frac{t_s}{\mathbb{E}[T]}}.\label{feas_fact_eq27}
\end{split}
\end{align}
Then, the near-optimality of the proposed solution (i.e., $\mathbf{r}^{\star}$) is explained in the following theorem:
\begin{theorem}[\textbf{Near-optimality}]\label{thm_case_sum_bi_le_1}
If $\sum_{i=1}^Mb_i< 1$, then  the solution $\mathbf{r}^{\star}$ given by \eqref{r_f_b_gr_1} and  \eqref{condition_x*_beta*} - \eqref{feas_fact_eq27} is near-optimal for solving \eqref{problem1} when $t_s/\mathbb{E}[T]$ is sufficiently small, in the following sense:
\begin{align}\label{sub_opt_gap_eq_2}
\vert\bar{\Delta}^{\text{peak}}(\mathbf{r}^{\star})-\bar{\Delta}^{\text{peak}}_{\text{opt}}\vert \leq \frac{t_s}{\mathbb{E}[T]}C_2\!+\!o\left(\frac{t_s}{\mathbb{E}[T]}\right),
\end{align}
where
\begin{align}\label{eq29thm3_3}
C_2=\sum_{l=1}^M\frac{w_l}{b_l(1-\sum_{i=1}^Mb_i)}\left(3\sum_{i=1}^Mb_i-\min_j b_j\right).
\end{align}
\end{theorem}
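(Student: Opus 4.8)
\textbf{Proof plan for Theorem~\ref{thm_case_sum_bi_le_1}.}
The plan is to follow the same two-step template that works for Theorem~\ref{thm_case_sum_bi_ge_1}: (i) upper-bound $\bar{\Delta}^{\text{peak}}(\mathbf{r}^{\star})$ by expanding the objective of Problem~\textbf{1} in powers of $h:=t_s/\mathbb{E}[T]$ along the proposed point $\mathbf{r}^{\star}$, and (ii) lower-bound $\bar{\Delta}^{\text{peak}}_{\text{opt}}$ by the value of a suitable relaxation (the $h=0$ problem), so that the difference of the two bounds is $hC_2+o(h)$. First I would verify feasibility: substituting $r^\star_l=\min\{b_l,\beta^\star\sqrt{w_l}\}x^\star$ with the $x^\star,\beta^\star,c_l$ of \eqref{condition_x*_beta*}--\eqref{feas_fact_eq27} into the energy constraint $\sigma_l\le b_l$, I would show that the choice of $c_l$ (and hence $x^\star=\min_l c_l/(1-\sum_i b_i)$) is exactly what makes each $\sigma_l\le b_l$ hold for all small $h$; the quadratic-in-$Q$ formula \eqref{feas_fact_eq27} should emerge as the root of the constraint $\sigma_l=b_l$ after linearizing $e^{-r_l h}\approx 1-r_l h$. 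Since $\sum_i b_i<1$ here, note $\beta^\star\sqrt{w_l}x^\star$ will turn out to dominate $b_l$, so effectively $r^\star_l=b_l x^\star$, which simplifies the algebra considerably.

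Next, for the upper bound, I would plug $r^\star_l=b_l x^\star$ into $\bar\Delta^{\text{peak}}(\mathbf{r})=\sum_l \frac{w_l e^{-r_l h}}{r_l}e^{\sum_i r_i h}(1+\sum_i r_i)+\sum_l w_l$, using $\sum_i r^\star_i=(\sum_i b_i)x^\star$ and $x^\star = O(1/\sqrt{h})$ from \eqref{condition_x*_beta*}--\eqref{feas_fact_eq27} (the $\sqrt{\cdot}$ term in $Q$ contributes the $\sqrt{h}$ scaling). Expanding $e^{-b_l x^\star h}$ and $e^{(\sum_i b_i)x^\star h}$ to first order in $x^\star h = O(\sqrt h)$ and collecting terms, the leading term is $\sum_l \frac{w_l}{b_l x^\star}$, which is $O(\sqrt h)$, plus corrections; tracking the $O(h)$ coefficient carefully should yield the bracketed factor $3\sum_i b_i-\min_j b_j$ in $C_2$. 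For the lower bound I would drop the sensing-time terms ($h\to 0$) to obtain a convex relaxation of Problem~\textbf{1}, solve it in closed form (this is where $\beta^\star=\sum_i 1/\sqrt{w_i}$ and the structure $r_l\propto b_l$ come from the KKT conditions with the active energy constraints $\sum_i r_i = $ const and $\sigma_l\le b_l$), and show its optimum is $\sum_l\frac{w_l}{b_l(1-\sum_i b_i)}\cdot(\text{something})+\sum_l w_l$ up to $o(1)$; subtracting gives \eqref{sub_opt_gap_eq_2}. Finally I would combine: $\bar\Delta^{\text{peak}}(\mathbf r^\star)-\bar\Delta^{\text{peak}}_{\text{opt}}\le (\text{upper bound}) - (\text{lower bound}) = hC_2+o(h)$, and the reverse inequality is trivial since $\mathbf r^\star$ is feasible.

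The main obstacle I anticipate is step (ii): unlike the energy-adequate case where the $h=0$ relaxation is clean because $\sum_i b_i\ge 1$ lets one source stay awake, here the binding constraints are the individual $\sigma_l\le b_l$, and the $h\to 0$ problem's optimal solution sits on the boundary of a shrinking feasible region (the feasible $r_l$ scale like $1/(1-\sum_i b_i)$ times an $h$-dependent factor that blows up). So I must be careful that the relaxation I use for the lower bound is genuinely a \emph{relaxation} for all small $h>0$ and that its optimal value converges to the right constant; a naive ``set $t_s=0$'' may change the feasible set in a way that is not monotone. The fix is to relax the constraint $\sigma_l\le b_l$ to its first-order-in-$h$ underestimate (dropping the $+r_l e^{-r_l h}$ numerator term or bounding $1-e^{-r_l h}\le r_l h$ appropriately), giving a constraint of the form $r_l h \cdot(\sum_i r_i) \lesssim$ something, which is looser than the true one and still tractable. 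A secondary nuisance is keeping the $o(h)$ remainders uniform — I would bound the Taylor remainders of the exponentials using $x^\star h=O(\sqrt h)\to 0$ and $x^\star=O(1/\sqrt h)$, so that products like $(x^\star)^2 h^2 = O(h)$ are genuinely controlled.
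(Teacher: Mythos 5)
Your three-step template (feasibility, upper bound by substitution, lower bound by convex relaxation, then Taylor-expand the gap) is exactly the paper's, but there is a concrete error in the scaling that would derail the computation. In the energy-scarce regime $x^{\star}=\min_l c_l/(1-\sum_i b_i)$ with $c_l\to 1$ as $h:=t_s/\mathbb{E}[T]\to 0$ (the square root in $Q$ contains a term \emph{proportional} to $h$, so it only perturbs $c_l$ by $O(h)$, as in \eqref{eq37}); hence $x^{\star}\to 1/(1-\sum_i b_i)$ is $O(1)$, \emph{not} $O(1/\sqrt{h})$, and each $r^{\star}_l\to b_l/(1-\sum_i b_i)$ stays bounded. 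Your claimed leading term $\sum_l w_l/(b_l x^{\star})=O(\sqrt{h})$ contradicts the asymptotic value $\sum_l[w_l/b_l+w_l]$ of \eqref{asymptotic_value_final_2} (note $1/x^{\star}+\sum_i b_i\to 1$), and carrying an $O(1/\sqrt h)$ ansatz through the expansion would produce an $O(\sqrt{h})$ gap rather than the $O(h)$ gap of \eqref{sub_opt_gap_eq_2} — the whole point of this regime is that the sleep rates do not blow up, which is why the gap improves from $O(\sqrt h)$ to $O(h)$. The correct expansion is in powers of $h$ with $x^{\star}h=O(h)$, using $e^{\sum_i b_i x^{\star}h}=1+Zh+o(h)$ and $1/x^{\star}+\sum_i b_i=1+(\max_l k_l)(1-\sum_i b_i)h+o(h)$, where $Z=\sum_i b_i/(1-\sum_i b_i)$ and $k_l=(\sum_i b_i-b_l)/(1-\sum_i b_i)^2$.

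A second, smaller issue is the direction of your lower-bound relaxation: bounding $1-e^{-r_l h}\le r_l h$ gives an \emph{upper} estimate of $\sigma_l$, so imposing it against $b_l$ \emph{tightens} the constraint (that inequality is what the paper uses for feasibility in Step 1). To relax, one uses $\sigma_l> r_l/(\sum_i r_i+1)$, yielding Problem \textbf{2} with constraints $r_l\le b_l(\sum_i r_i+1)$. Moreover, simply setting $h=0$ in the objective does not give a lower bound that is tight to order $h$: the paper additionally bounds $e^{-r_l h}/r_l\ge e^{-\sum_i r_i h}/[b_l(\sum_i r_i+1)]$ using that relaxed constraint, so the relaxed objective $\sum_l (w_l/b_l)e^{-\sum_i r_i h}+\sum_l w_l$ retains an $h$-dependence; its KKT analysis (all constraints active since $\sum_i b_i<1$) forces $r_l=b_l/(1-\sum_i b_i)$ and produces the correction $e^{-Zh}=1-Zh+o(h)$ that must be subtracted from the upper bound to assemble the constant $C_2$. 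Without both fixes you cannot recover the coefficient $3\sum_i b_i-\min_j b_j$.
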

 \ifreport
\begin{proof}
See Section \ref{proof_case_sum_bi_le_1}. 
\end{proof}
\else 
\begin{proof}
See our technical report \cite{bedewy2019optimal}.
\end{proof}
\fi
From Theorem \ref{thm_case_sum_bi_le_1}, we obtain the following corollary:
\begin{corollary}[\textbf{Asymptotic optimality}]\label{cor2}
If $\sum_{i=1}^Mb_i< 1$, then \eqref{zero_gab_result} holds for the solution $\mathbf{r}^{\star}$ given by \eqref{r_f_b_gr_1} and \eqref{condition_x*_beta*} - \eqref{feas_fact_eq27}. Hence, our proposed solution is asymptotically optimal for the Problem \textbf{1} as $t_s/\mathbb{E}[T]\rightarrow 0$. Moreover, the asymptotic optimal value of Problem \textbf{1} as $t_s/\mathbb{E}[T]\to 0$ is 
\begin{equation}\label{asymptotic_value_final_2}
\begin{split}
\lim_{\frac{t_s}{\mathbb{E}[T]}\rightarrow 0} \bar{\Delta}^{\text{peak}}_{\text{opt}}&=\sum_{i=1}^M\left[\frac{w_i}{\min\{b_i,\beta^{\star}\sqrt{w_i}\}}+w_i\right]\\&
=\sum_{i=1}^M\left[\frac{w_i}{b_i}+w_i\right].
\end{split}
\end{equation}
\end{corollary}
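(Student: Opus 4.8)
The plan is to derive both assertions directly from Theorem~\ref{thm_case_sum_bi_le_1}. The constant $C_2$ in \eqref{eq29thm3_3} depends only on the fixed parameters $\{b_i,w_i\}_{i=1}^M$ and not on $t_s/\mathbb{E}[T]$, so the right-hand side of \eqref{sub_opt_gap_eq_2} equals $\tfrac{t_s}{\mathbb{E}[T]}C_2+o\!\big(\tfrac{t_s}{\mathbb{E}[T]}\big)\to 0$ as $t_s/\mathbb{E}[T]\to 0$; this gives \eqref{zero_gab_result} at once. It also implies $\lim \bar{\Delta}^{\text{peak}}_{\text{opt}}=\lim \bar{\Delta}^{\text{peak}}(\mathbf{r}^{\star})$ provided the latter limit exists, so the remaining task is to evaluate $\lim_{t_s/\mathbb{E}[T]\to 0}\bar{\Delta}^{\text{peak}}(\mathbf{r}^{\star})$ by substituting $\mathbf{r}^{\star}$ from \eqref{r_f_b_gr_1} and \eqref{condition_x*_beta*}--\eqref{feas_fact_eq27} into the objective of \eqref{problem1}.

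First I would simplify $r_l^\star$. Since $\sum_{i=1}^M b_i<1$ and every $b_i>0$, we have $b_l<1$, whereas $\beta^{\star}\sqrt{w_l}=\sqrt{w_l}\sum_{i=1}^M w_i^{-1/2}=\sum_{i=1}^M\sqrt{w_l/w_i}\ge 1$ because the $i=l$ summand already equals $1$. Hence $\min\{b_l,\beta^{\star}\sqrt{w_l}\}=b_l$ for all $l$, so $r_l^\star=b_l x^{\star}$; this is exactly the identity between the two lines of \eqref{asymptotic_value_final_2}. Next, writing $h=t_s/\mathbb{E}[T]$, the radicand in \eqref{feas_fact_eq27} is $b_l^2(1-\sum_i b_i)^4+O(h)$, so $Q\to 2b_l(1-\sum_i b_i)^2$ and $c_l\to 1$ by \eqref{feasible_factor}; therefore $\min_l c_l\to 1$ and $x^{\star}\to (1-\sum_i b_i)^{-1}$. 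Consequently $r_l^\star\to b_l/(1-\sum_i b_i)$, $\sum_i r_i^\star\to \sum_i b_i/(1-\sum_i b_i)$, $1+\sum_i r_i^\star\to (1-\sum_i b_i)^{-1}$, and $e^{-r_l^\star h},\,e^{\sum_i r_i^\star h}\to 1$. Passing to the limit term by term in the objective of \eqref{problem1}, the $l$-th summand tends to $\frac{w_l(1-\sum_i b_i)}{b_l}\cdot\frac{1}{1-\sum_i b_i}=\frac{w_l}{b_l}$, which yields $\lim_{h\to 0}\bar{\Delta}^{\text{peak}}(\mathbf{r}^{\star})=\sum_{i=1}^M\!\big(\frac{w_i}{b_i}+w_i\big)$ and hence \eqref{asymptotic_value_final_2}.

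This is fundamentally a continuity/limit computation, so there is no serious obstacle; the only places needing a touch of care are the elementary expansion of the radical in \eqref{feas_fact_eq27} that gives $c_l\to 1$, and the bookkeeping that $x^{\star}>0$ with $\mathbf{r}^{\star}$ feasible and bounded away from $0$ uniformly for all small $h$, so that the term-by-term passage to the limit in the objective is legitimate — both of which are already available from the proof of Theorem~\ref{thm_case_sum_bi_le_1}.
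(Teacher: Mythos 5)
Your proposal is correct and follows essentially the same route as the paper: the vanishing gap is read off from Theorem~\ref{thm_case_sum_bi_le_1}, and the limiting value $\sum_{i}(w_i/b_i+w_i)$ comes from the explicit form of $\mathbf{r}^{\star}$ (using $\min\{b_l,\beta^{\star}\sqrt{w_l}\}=b_l$, $c_l\to 1$, and $x^{\star}\to (1-\sum_i b_i)^{-1}$). The only cosmetic difference is that the paper identifies the limit by noting that both the upper bound \eqref{upper_bound_sumb_i_le_1} and the lower bound \eqref{lower_bound_sum_bi_le_1} converge to this common value, whereas you compute $\lim\bar{\Delta}^{\text{peak}}(\mathbf{r}^{\star})$ directly by term-by-term substitution and transfer it to $\bar{\Delta}^{\text{peak}}_{\text{opt}}$ via the gap; the ingredients are the same.
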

 \ifreport
\begin{proof}
See Section \ref{proof_case_sum_bi_le_1}. 
\end{proof}
\else 
\begin{proof}
See our technical report \cite{bedewy2019optimal}.
\end{proof}
\fi
Interestingly,  the asymptotic optimal values of Problem \textbf{1} in both regimes, given by \eqref{asymptotic_value_final} and \eqref{asymptotic_value_final_2}, are identical. However, in the energy-scarce regime, we can observe that $\beta^{\star}$, which is defined in \eqref{condition_x*_beta*}, always satisfies $\min\{b_l,\beta^{\star}\sqrt{w_l}\}=b_l$ for all $l$.
\begin{remark}
We would like to point out that the condition $t_s/\mathbb{E}[T]\approx 0$ is satisfied in many practical applications. For instance, in wireless sensor networks \cite{el2002spatial}, the carrier  sensing   time   is $t_s=40~\mu$s, while the transmission time is around $5$ ms. Hence, $t_s/\mathbb{E}[T]\approx 0.008$.
\end{remark}
\subsection{Discussion}
In this subsection, we discuss a simple implementation of our proposed solution. Moreover, we provide some useful insights about our proposed solution at the limit point $t_s/\mathbb{E}[T]\to 0$. 

\subsubsection{Implementation of Sleep-wake Scheduling}
We devise a simple algorithm to compute our solution $\mathbf{r}^\star$, which is provided in Algorithm \ref{alg1}. Notice that $\mathbf{r}^\star$ has the same expression \eqref{r_f_b_gr_1} in the energy-adequate and energy-scarce regimes. We exploit this fact to simplify the implementation of sleep-wake scheduling. In particular, the sources report $w_l$ and $b_l$ to the AP, which computes $\beta^\star$ and $x^\star$, and broadcasts them back to the sources. After receiving $\beta^\star$ and $x^\star$,  source~$l$ computes $r_l^\star$ based on \eqref{r_f_b_gr_1}. In practical wireless sensor networks, e.g., smart city networks and industrial control sensor networks \cite{wsn_industry,hsieh2018decentralized}, the sensors report their measurements via an access point (AP). Hence, it is reasonable to employ the AP in implementing the sleep-wake scheduler.
\begin{algorithm}[h]
\footnotesize
\SetKwData{NULL}{NULL}
\SetCommentSty{footnotesize} 
The AP gathers the parameters  $\{(w_i,b_i)_{i=1}^M,t_s/\mathbb{E}[T]\}$\;
\uIf{$\sum_{i=1}^Mb_i\ge 1$}{
The AP derives $x^{\star},\beta^{\star}$ according to \eqref{x*} and \eqref{condition_beta}\;}
\Else{
The AP derives  $x^{\star},\beta^{\star}$ according to \eqref{condition_x*_beta*} - \eqref{feas_fact_eq27}\;}
The AP broadcasts $x^{\star},\beta^{\star}$ to all the $M$ sources\;
Upon hearing $x^{\star},\beta^{\star}$, source~$l$ compute $r^{\star}_l$ from \eqref{r_f_b_gr_1}\;
\caption{Implementation of sleep-wake scheduler.}\label{alg1}
\end{algorithm}

In the above implementation procedure, the sources do need not know if the overall network is in the energy-adequate or energy-scarce regime; only the AP knows about it. Further, the amount of downlink signaling overhead is small, because only two parameters $\beta^\star$ and $x^\star$ are broadcasted to the sources. Finally, when the node density is high, the scalability of the network is a crucial concern and reporting $w_l$ and $b_l$ for each source is impractical. In this case, the AP can compute $\beta^\star$ and $x^\star$ by estimating the distribution of $w_l$ and $b_l$, as well as the number of source nodes, which reduces the uplink signaling overhead.

\subsubsection{Asymptotic Behavior of The Optimal Solution}
In the energy-adequate regime, the sleeping period parameter $r_l^\star \to \infty$ of source~$l$ as $t_s/\mathbb{E}[T] \to 0$, while the ratio $r^{\star}_l/r^{\star}_i$ between source~$l$ and source~$i$ is kept as a constant for all $l$ and $i$. In this case, the sleeping time of the sources tends to zero. Meanwhile, since $t_s/ \mathbb{E}[T] \to 0$, the sensing time becomes negligible. The channel access probability of source~$l$ in this limit can be computed as 
\begin{align}\label{sol_sum_bi_at_limit}
\lim_{\frac{t_s}{\mathbb{E}[T]}\to 0}\sigma^{\star}_l=\min\{b_l,\beta^{\star}\sqrt{w_l}\}. 
\end{align} 
Because of \eqref{condition_beta}, $\lim_{t_s/\mathbb{E}[T]\to 0} \sum_{i=1}^M\sigma_i^\star = 1$. Hence, the channel is occupied by the sources at all time, without any time overhead on sensing and sleeping. The performance of such scheduler is asymptotically no worse than any synchronized scheduler (e.g., for time-slotted systems) in theory, for which we assume that the time overhead needed for coordinating different sources with random packet sizes are omitted. Note that because of the coordination overhead, such synchronized schedulers are only feasible when the number of sources M is small.

 In synchronized schedulers, the AP assigns channel access among the sources in an i.i.d. manner. Under such a scheduler, there is a probability vector $\mathbf{a} = \left\{ a_l \right\}_{l=1}^M, \sum_{i=1}^M a_i = 1$, such that each source~$l$ gains channel access after a packet transmission with a probability equal to $a_l$. We can perform an analysis similar to that of Section~\ref{objective_function1}, and show that the total weighted average peak age of a synchronized scheduler is given by
\begin{equation}\label{obj_func_c}
\sum_{i=1}^M\left[\frac{w_i\mathbb{E}[T]}{a_i}+w_i~\mathbb{E}[T]\right].
\end{equation}
Moreover, similar to the derivation in \ifreport Appendix \ref{Appendix_A''}\else our technical report \cite[Appendix B]{bedewy2019optimal}\fi, we can show that the fraction of time during which source~$l$ transmits update packets under a synchronized scheduler is equal to $a_l$. Hence, the problem of designing an optimal synchronized scheduler that minimizes the total weighted average peak age under energy constraints can be cast as the following convex optimization problem:
\begin{align}
\bar{\Delta}_{\text{opt-s}}^{\text{peak}}\triangleq\min_{a_i>0}&\sum_{i=1}^M\left[\frac{w_i}{a_i}+w_i\right]\label{problem_centr}\\
\textbf{s.t.}~&a_l\le b_l,~\forall l,\label{eq105}\\&
\sum_{i=1}^Ma_i=1,\label{eq106}
\end{align}
where we note that we have normalized the objective function by $\mathbb{E}[T]$. Next, we show that the performance of our proposed algorithm converges to that of the optimal synchronized scheduler when $t_s/\mathbb{E}[T]\to 0$. 
\begin{corollary}\label{corollary_centeralized_scheduler}
If $\sum_{i=1}^Mb_i\ge 1$, then we have
\begin{align}
\lim_{\frac{t_s}{\mathbb{E}[T]}\rightarrow 0} \bar{\Delta}^{\text{peak}}_{\text{opt}}=\bar{\Delta}_{\text{opt-s}}^{\text{peak}}.
\end{align}
\end{corollary}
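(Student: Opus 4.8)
The plan is to combine the asymptotic value of Problem~\textbf{1} already computed in Corollary~\ref{cor1} with an explicit, closed-form solution of the convex program \eqref{problem_centr}--\eqref{eq106} that defines $\bar{\Delta}_{\text{opt-s}}^{\text{peak}}$, and then observe that the two expressions are literally the same. By Corollary~\ref{cor1}, in the energy-adequate regime $\lim_{t_s/\mathbb{E}[T]\to 0}\bar{\Delta}^{\text{peak}}_{\text{opt}}=\sum_{i=1}^M\big[\,w_i/\min\{b_i,\beta^{\star}\sqrt{w_i}\}+w_i\,\big]$, where $\beta^{\star}$ is the root of \eqref{condition_beta}. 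Hence it suffices to show that an optimal solution $\mathbf{a}^{\star}$ of \eqref{problem_centr}--\eqref{eq106} is $a_l^{\star}=\min\{b_l,\beta^{\star}\sqrt{w_l}\}$ for every $l$, with the \emph{same} $\beta^{\star}$; substituting this back into the objective of \eqref{problem_centr} then yields exactly \eqref{asymptotic_value_final}.

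The first step is to solve \eqref{problem_centr}--\eqref{eq106}. This is a convex program: each term $w_i/a_i$ is convex on $a_i>0$ (as $w_i>0$), and the constraints \eqref{eq105}--\eqref{eq106} are affine, so the KKT conditions are necessary and sufficient for global optimality (the linearity constraint qualification applies), and the feasible set is nonempty precisely because $\sum_{i=1}^Mb_i\ge 1$. Introducing multipliers $\mu_l\ge 0$ for \eqref{eq105} and $\nu\in\mathbb{R}$ for \eqref{eq106}, stationarity gives $a_l^{2}=w_l/(\mu_l+\nu)$ and complementary slackness gives $\mu_l(a_l-b_l)=0$. Put $\beta^{\star}:=1/\sqrt{\nu}$. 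For indices with $\mu_l=0$ one gets $a_l=\beta^{\star}\sqrt{w_l}$, which must satisfy $a_l\le b_l$, i.e. $\beta^{\star}\sqrt{w_l}\le b_l$; for indices with $\mu_l>0$ one gets $a_l=b_l$ together with $\mu_l+\nu=w_l/b_l^{2}>\nu$, i.e. $b_l<\beta^{\star}\sqrt{w_l}$. In both cases $a_l=\min\{b_l,\beta^{\star}\sqrt{w_l}\}$, and the value of $\beta^{\star}$ is pinned down by the equality constraint \eqref{eq106}, which reads $\sum_{i=1}^M\min\{b_i,\beta^{\star}\sqrt{w_i}\}=1$ --- precisely equation \eqref{condition_beta}. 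So $\mathbf{a}^{\star}$ coincides with the limiting channel-access vector $\lim_{t_s/\mathbb{E}[T]\to 0}\sigma^{\star}_l$ of \eqref{sol_sum_bi_at_limit}.

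To make the KKT point well defined I would verify existence and uniqueness of $\beta^{\star}$ by noting that $g(\beta):=\sum_{i=1}^M\min\{b_i,\beta\sqrt{w_i}\}$ is continuous and nondecreasing on $[0,\infty)$ with $g(0)=0$ and $g(\beta)\uparrow\sum_{i=1}^Mb_i\ge 1$ as $\beta\to\infty$; thus $g(\beta^{\star})=1$ has a solution, and the induced $\mathbf{a}^{\star}$ is unique (in the degenerate boundary case $\sum_i b_i=1$ one takes $\beta^{\star}=+\infty$, equivalently $a_l^{\star}=b_l$ for all $l$, which is the unique feasible point and where KKT still applies since the constraints are linear). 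Finally, plugging $a_l^{\star}=\min\{b_l,\beta^{\star}\sqrt{w_l}\}$ into $\sum_{i=1}^M[\,w_i/a_i+w_i\,]$ gives $\bar{\Delta}_{\text{opt-s}}^{\text{peak}}=\sum_{i=1}^M\big[\,w_i/\min\{b_i,\beta^{\star}\sqrt{w_i}\}+w_i\,\big]$, which is exactly \eqref{asymptotic_value_final}; comparing with Corollary~\ref{cor1} yields $\lim_{t_s/\mathbb{E}[T]\to 0}\bar{\Delta}^{\text{peak}}_{\text{opt}}=\bar{\Delta}_{\text{opt-s}}^{\text{peak}}$. I do not anticipate a substantive obstacle: the only delicate points are the boundary case $\sum_i b_i=1$ (where Slater's condition fails but the linear-constraint qualification keeps KKT valid) and checking that the multiplier $\beta^{\star}$ arising from the KKT system is the very root appearing in \eqref{condition_beta}, which is immediate once \eqref{eq106} is imposed.
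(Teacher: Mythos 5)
Your proposal is correct and follows essentially the same route as the paper's proof in Appendix~\ref{Appendix_E}: solve the convex program \eqref{problem_centr}--\eqref{eq106} via KKT, identify the multiplier $\sqrt{1/\nu^{\star}}$ with the root $\beta^{\star}$ of \eqref{condition_beta}, and match the resulting optimal value with the asymptotic value \eqref{asymptotic_value_final} from Corollary~\ref{cor1}. Your extra care with the boundary case $\sum_i b_i=1$ and with the existence of $\beta^{\star}$ (which the paper delegates to Lemma~\ref{lemma_solution_19}) is welcome but does not change the argument.
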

\ifreport
\begin{proof}
The proof is provided in Appendix \ref{Appendix_E} which is listed at the end of the appendix as it requires some results from precedent appendixes.
\end{proof}
\else
\begin{proof}
See our technical report \cite{bedewy2019optimal}.
\end{proof}
\fi
Synchronized schedulers were recently studied in \cite{talak2018optimizing2} for the case without energy constraints, i.e., $b_l \geq 1$ for all $l$. According to Corollary \ref{corollary_centeralized_scheduler}, the channel access probability of the synchronized scheduler in \cite{talak2018optimizing2} is a special case of our solution \eqref{sol_sum_bi_at_limit} where $b_l \geq 1$ for all $l$.

%
%

On the other hand, in the energy-scarce regime, the sleeping period parameter $r^{\star}_l$ of source~$l$ converges to a constant value when $t_s/\mathbb{E}[T]\to 0$, i.e., we have
\begin{align}
 \lim_{\frac{t_s}{\mathbb{E}[T]}\to 0}r^{\star}_l=\frac{b_l}{1-\sum_{i=1}^Mb_i}.
\end{align}
 Since the cumulative energy is scarce, the sources necessarily need to idle in order to meet their target lifetime. Hence, sleep periods are imposed for achieving the optimal trade-off between minimizing AoI and energy consumption.

\section{Proofs of the Main Results}\label{proof1}
\ifreport
In this section, we provide the proofs of Theorem \ref{thm_case_sum_bi_ge_1}, Corollary \ref{cor1}, Theorem \ref{thm_case_sum_bi_le_1}, and Corollary \ref{cor2}.
\subsection{The Proofs of Theorem \ref{thm_case_sum_bi_ge_1} and Corollary \ref{cor1}}\label{proof_case_sum_bi_ge_1}
We prove Theorem \ref{thm_case_sum_bi_ge_1} and Corollary \ref{cor1} in three steps: 

\textbf{Step 1}: We begin by showing that our solution $\mathbf{r}^{\star}$ given by \eqref{r_f_b_gr_1} - \eqref{condition_beta} is feasible for Problem \textbf{1}. 
\begin{lemma}\label{lemma_feasibility}
If $\sum_{i=1}^Mb_i\ge 1$, then the solution $\mathbf{r}^{\star}$ given by  \eqref{r_f_b_gr_1} - \eqref{condition_beta} is feasible for Problem \textbf{1}. 
\end{lemma}
 \ifreport
\begin{proof}
See Appendix \ref{Appendix_A}.
\end{proof}
\else 
\begin{proof}
See our technical report \cite{bedewy2019optimal}.
\end{proof}
\fi
Hence, by substituting this solution $\mathbf{r}^{\star}$ into the objective function of Problem \textbf{1} in \eqref{problem1}, we get an upper bound on the optimal value $\bar{\Delta}^{\text{peak}}_{\text{opt}}$, which is expressed in the following lemma: 
\begin{lemma}\label{lemma_upper_bound_sum_i_b_i_ge1}
If $\sum_{i=1}^Mb_i\ge 1$, then
\begin{align}\label{upper_bound_sumb_i_ge_1}
\bar{\Delta}^{\text{peak}}_{\text{opt}}\le \bar{\Delta}^{\text{peak}}(\mathbf{r}^{\star})\le \sum_{i=1}^M\left[\frac{w_ie^{x^{\star}\frac{t_s}{\mathbb{E}[T]}}\left(1+\frac{1}{x^{\star}}\right)}{\min\{b_i,\beta^{\star}\sqrt{w_i}\}}+w_i\right],
\end{align}
where $x^{\star}$, $\beta^{\star}$ are defined in \eqref{x*}, \eqref{condition_beta}. 
\end{lemma}
\begin{proof}
In Lemma \ref{lemma_feasibility}, we showed that our proposed solution $\mathbf{r}^{\star}$ given by \eqref{r_f_b_gr_1} - \eqref{condition_beta} is feasible for Problem \textbf{1}. Hence, we substitute this solution into Problem \textbf{1} to obtain the following upper bound:
\begin{align}
\sum_{i=1}^M\left[\frac{w_ie^{x^{\star}\frac{t_s}{\mathbb{E}[T]}}\left(1+\frac{1}{x^{\star}}\right)e^{-\min\{b_i,\beta^{\star}\sqrt{w_i}\}x^\star\frac{t_s}{\mathbb{E}[T]}}}{\min\{b_i,\beta^{\star}\sqrt{w_i}\}}+w_i\right].
\end{align}
Next, we replace $e^{-\min\{b_i,\beta^{\star}\sqrt{w_i}\}x^\star(t_s/\mathbb{E}[T])}$ by 1 
to derive another upper bound with a simple expression, which is given by \eqref{upper_bound_sumb_i_ge_1}. This completes the proof. 
\end{proof}

\textbf{Step 2}: 
We now construct a lower bound on the optimal
value of Problem \textbf{1}. Suppose that $\mathbf{r}=(r_1,\ldots,r_M)$ is a feasible solution to Problem \textbf{1}, such that $r_l >0$ and
\begin{align}
\frac{[1-e^{-r_l\frac{t_s}{\mathbb{E}[T]}}]\sum_{i=1}^Mr_i+r_le^{-r_l\frac{t_s}{\mathbb{E}[T]}}}{\sum_{i=1}^Mr_i+1}\leq b_l, \forall l.
\end{align} 
Because $[1-e^{-r_l(t_s/\mathbb{E}[T])}]\sum_{i=1}^Mr_i+r_le^{-r_l(t_s/\mathbb{E}[T])} > r_l$ for all $l$, $\mathbf{r}$ satisfies $r_l/(\sum_{i=1}^Mr_i+1)\leq b_l$. Hence, the following Problem \textbf{2} has a larger feasible set than Problem \textbf{1}: (Problem \textbf{2})
\begin{align}
\begin{split}\label{problem2}
\!\!\!\!\bar{\Delta}^{\text{peak}}_{\text{opt},2}\triangleq\min_{r_l>0}& \sum_{l=1}^M\frac{w_le^{-r_l\frac{t_s}{\mathbb{E}[T]}}}{r_l}e^{\sum_{i=1}^Mr_i\frac{t_s}{\mathbb{E}[T]}}\left(1+\sum_{i=1}^Mr_i\right)+\sum_{l=1}^Mw_l
\end{split}\\
\begin{split}\label{constrain_final_1}
\textbf{s.t.}~&r_l\leq b_l\left(\sum_{i=1}^Mr_i+1\right), \forall l,
\end{split}
\end{align}
where $\bar{\Delta}^{\text{peak}}_{\text{opt},2}$ is the optimal value of Problem \textbf{2}. The optimal objective value of Problem \textbf{2} is a lower bound of that of Problem \textbf{1}. We note that the constraint set corresponding to Problem~\textbf{2} is convex. Thus, this relaxation converts the constraint set of Problem~\textbf{1} to a convex one, and hence enables us to obtain a lower bound for the optimal value of Problem \textbf{1}, which is expressed in the following lemma: 
\begin{lemma}\label{lemma_bounds}
If $\sum_{i=1}^Mb_i\ge 1$, then  
\begin{align}
\bar{\Delta}^{\text{peak}}_{\text{opt}}\ge\bar{\Delta}^{\text{peak}}_{\text{opt},2}\ge\sum_{i=1}^M\left[\frac{w_i}{\min\{b_i,\beta^{\star}\sqrt{w_i}\}}+w_i\right],\label{lower_bound_sum_bi_ge_1}
\end{align}
where $\beta^{\star}$ is the root of \eqref{condition_beta}. 
\end{lemma}
\ifreport
\begin{proof}
See Appendix \ref{Appendix_B}.
\end{proof}
\else 
\begin{proof}
See our technical report \cite{bedewy2019optimal}.
\end{proof}
\fi

\textbf{Step 3}: After the upper and lower bounds of $\bar{\Delta}^{\text{peak}}_{\text{opt}}$ were derived in Steps 1-2, we are ready to analysis their gap. By combining \eqref{upper_bound_sumb_i_ge_1} and \eqref{lower_bound_sum_bi_ge_1}, the sub-optimality gap of the solution $\mathbf{r}^\star$ given by \eqref{r_f_b_gr_1} - \eqref{condition_beta} is upper bounded by
\begin{align}\label{G1}
\begin{split}
&\vert\bar{\Delta}^{\text{peak}}(\mathbf{r}^{\star})\!-\!\bar{\Delta}^{\text{peak}}_{\text{opt}}\vert\le \sum_{i=1}^M\frac{w_i\left(e^{x^{\star}\frac{t_s}{\mathbb{E}[T]}}(1\!+\!\frac{1}{x^{\star}})\!-\!1 \right)}{\min\{b_i,\beta^{\star}\sqrt{w_i}\}},
\end{split}
\end{align}
where $x^{\star}$, $\beta^{\star}$ are defined in \eqref{x*}, \eqref{condition_beta}. Next, we characterize the right-hand-side (RHS) of \eqref{G1} by Taylor expansion. For simplicity, let $\epsilon=\frac{t_s}{\mathbb{E}[T]}$. Using the expression for $x^{\star}$ from \eqref{x*}, we have
\begin{equation}\label{x^*t_s}
\begin{split}
x^{\star}\epsilon=&-\frac{\epsilon}{2}+\sqrt{\frac{\epsilon^2}{4}+\epsilon}
=\frac{\epsilon}{\frac{\epsilon}{2}+\sqrt{\frac{\epsilon^2}{4}+\epsilon}}
=\sqrt{\epsilon}+o(\sqrt{\epsilon}).
\end{split}
\end{equation}
Moreover,
\begin{equation}\label{x^*ET}
\begin{split}
x^\star=&-\frac{1}{2}+\sqrt{\frac{1}{4}+\frac{1}{\epsilon}}
=\frac{\frac{1}{\epsilon}}{\frac{1}{2} + \sqrt{\frac{1}{4} + \frac{1}{\epsilon}}}
=\frac{1}{\sqrt{\epsilon}} + o\left(\frac{1}{\sqrt{\epsilon}}\right). 
\end{split}
\end{equation}
 Substituting \eqref{x^*t_s} and~\eqref{x^*ET} in \eqref{G1}, we obtain
\begin{align}
\vert\bar{\Delta}^{\text{peak}}(\mathbf{r}^{\star})-\bar{\Delta}^{\text{peak}}_{\text{opt}}\vert&
\le \sum_{i=1}^M\frac{w_i[e^{\sqrt{\epsilon} + o(\sqrt{\epsilon})} (1 + \sqrt{\epsilon} + o(\sqrt{\epsilon})) - 1 ]}{\min\{b_i,\beta^{\star}\sqrt{w_i}\}}\nonumber\\&
=\sum_{i=1}^M\frac{w_i[(1\!+\! \sqrt{\epsilon}\! +\!o(\sqrt{\epsilon}))(1 \!+\! \sqrt{\epsilon} \!+\! o(\sqrt{\epsilon})) \!-\! 1 ]}{\min\{b_i,\beta^{\star}\sqrt{w_i}\}}\nonumber\\&
=2 \sqrt{\epsilon} \sum_{i=1}^M\frac{w_i}{\min\{b_i,\beta^{\star}\sqrt{w_i}\}}  + o(\sqrt{\epsilon}),
\end{align}
where the second inequality involves the use of Taylor expansion. This proves Theorem \ref{thm_case_sum_bi_ge_1}. 

Moreover, we can observe that the gap $\vert\bar{\Delta}^{\text{peak}}(\mathbf{r}^{\star})-\bar{\Delta}^{\text{peak}}_{\text{opt}}\vert$ in the energy-adequate regime converges to zero at a speed of $O(\sqrt{\epsilon})$, as $\epsilon \to 0$. We also observe that both the upper and lower bounds \eqref{upper_bound_sumb_i_ge_1}, \eqref{lower_bound_sum_bi_ge_1}, converge to $\sum_{i=1}^M[(w_i/\min\{b_i,\beta^{\star}\sqrt{w_i}\})+w_i]$ as $t_s/\mathbb{E}[T]\rightarrow 0$. Thus, this value is the asymptotic optimal value of Problem \textbf{1}. This proves Corollary \ref{cor1}. 

\else
In this section, we provide the proofs of Theorem \ref{thm_case_sum_bi_ge_1} and Corollary \ref{cor1}. The proofs of  Theorem \ref{thm_case_sum_bi_le_1} and Corollary \ref{cor2} have the same idea, and are provided in our technical report \cite{bedewy2019optimal}. We prove Theorem \ref{thm_case_sum_bi_ge_1} and Corollary \ref{cor1} in three steps:

\textbf{Step 1}: We begin by showing that our solution $\mathbf{r}^{\star}$ given by \eqref{r_f_b_gr_1} - \eqref{condition_beta} is feasible for Problem \textbf{1}. 
\begin{lemma}\label{lemma_feasibility}
If $\sum_{i=1}^Mb_i\ge 1$, then the solution $\mathbf{r}^{\star}$ given by  \eqref{r_f_b_gr_1} - \eqref{condition_beta} is feasible for Problem \textbf{1}. 
\end{lemma} 
\begin{proof}
See our technical report \cite{bedewy2019optimal}.
\end{proof}
Hence, by substituting this solution $\mathbf{r}^{\star}$ into the objective function of Problem \textbf{1} in \eqref{problem1}, we get an upper bound on the optimal value $\bar{\Delta}^{\text{peak}}_{\text{opt}}$, which is expressed in the following lemma: 
\begin{lemma}\label{lemma_upper_bound_sum_i_b_i_ge1}
If $\sum_{i=1}^Mb_i\ge 1$, then
\begin{align}\label{upper_bound_sumb_i_ge_1}
\bar{\Delta}^{\text{peak}}_{\text{opt}}\le \bar{\Delta}^{\text{peak}}(\mathbf{r}^{\star})\le \sum_{i=1}^M\left[\frac{w_ie^{x^{\star}\frac{t_s}{\mathbb{E}[T]}}\left(1+\frac{1}{x^{\star}}\right)}{\min\{b_i,\beta^{\star}\sqrt{w_i}\}}+w_i\right],
\end{align}
where $x^{\star}$, $\beta^{\star}$ are defined in \eqref{x*}, \eqref{condition_beta}. 
\end{lemma}
\begin{proof}
See our technical report \cite{bedewy2019optimal}.
\end{proof}

\textbf{Step 2}: 
We now construct a lower bound on the optimal
value of Problem \textbf{1}. Suppose that $\mathbf{r}=(r_1,\ldots,r_M)$ is a feasible solution to Problem \textbf{1}, such that $r_l >0$ and
\begin{align}
\frac{[1-e^{-r_l\frac{t_s}{\mathbb{E}[T]}}]\sum_{i=1}^Mr_i+r_le^{-r_l\frac{t_s}{\mathbb{E}[T]}}}{\sum_{i=1}^Mr_i+1}\leq b_l, \forall l.
\end{align} 
Because $[1-e^{-r_l(t_s/\mathbb{E}[T])}]\sum_{i=1}^Mr_i+r_le^{-r_l(t_s/\mathbb{E}[T])} > r_l$ for all $l$, $\mathbf{r}$ satisfies $r_l/(\sum_{i=1}^Mr_i+1)\leq b_l$. Hence, the following Problem \textbf{2} has a larger feasible set than Problem \textbf{1}: (Problem \textbf{2})
\begin{align}
\begin{split}\label{problem2}
\!\!\!\!\bar{\Delta}^{\text{peak}}_{\text{opt},2}\triangleq\min_{r_l>0}& \sum_{l=1}^M\frac{w_le^{-r_l\frac{t_s}{\mathbb{E}[T]}}}{r_l}e^{\sum_{i=1}^Mr_i\frac{t_s}{\mathbb{E}[T]}}\left(1+\sum_{i=1}^Mr_i\right)+\sum_{l=1}^Mw_l
\end{split}\\
\begin{split}\label{constrain_final_1}
\textbf{s.t.}~&r_l\leq b_l\left(\sum_{i=1}^Mr_i+1\right), \forall l,
\end{split}
\end{align}
where $\bar{\Delta}^{\text{peak}}_{\text{opt},2}$ is the optimal value of Problem \textbf{2}. The optimal objective value of Problem \textbf{2} is a lower bound of that of Problem \textbf{1}. We note that the constraint set corresponding to Problem~\textbf{2} is convex. Thus, this relaxation converts the constraint set of Problem~\textbf{1} to a convex one, and hence enables us to obtain a lower bound for the optimal value of Problem \textbf{1}, which is expressed in the following lemma: 
\begin{lemma}\label{lemma_bounds}
If $\sum_{i=1}^Mb_i\ge 1$, then  
\begin{align}
\bar{\Delta}^{\text{peak}}_{\text{opt}}\ge\bar{\Delta}^{\text{peak}}_{\text{opt},2}\ge\sum_{i=1}^M\left[\frac{w_i}{\min\{b_i,\beta^{\star}\sqrt{w_i}\}}+w_i\right],\label{lower_bound_sum_bi_ge_1}
\end{align}
where $\beta^{\star}$ is the root of \eqref{condition_beta}. 
\end{lemma}
\begin{proof}
See our technical report \cite{bedewy2019optimal}.
\end{proof}

\textbf{Step 3}: After the upper and lower bounds of $\bar{\Delta}^{\text{peak}}_{\text{opt}}$ were derived in Steps 1-2, we are ready to analysis their gap. By combining \eqref{upper_bound_sumb_i_ge_1} and \eqref{lower_bound_sum_bi_ge_1}, the sub-optimality gap of the solution $\mathbf{r}^\star$ given by \eqref{r_f_b_gr_1} - \eqref{condition_beta} is upper bounded by
\begin{align}\label{G1}
\begin{split}
&\vert\bar{\Delta}^{\text{peak}}(\mathbf{r}^{\star})\!-\!\bar{\Delta}^{\text{peak}}_{\text{opt}}\vert\le \sum_{i=1}^M\frac{w_i\left(e^{x^{\star}\frac{t_s}{\mathbb{E}[T]}}(1\!+\!\frac{1}{x^{\star}})\!-\!1 \right)}{\min\{b_i,\beta^{\star}\sqrt{w_i}\}},
\end{split}
\end{align}
where $x^{\star}$, $\beta^{\star}$ are defined in \eqref{x*}, \eqref{condition_beta}. Next, we characterize the right-hand-side (RHS) of \eqref{G1} by Taylor expansion. For simplicity, let $\epsilon=\frac{t_s}{\mathbb{E}[T]}$. Using the expression for $x^{\star}$ from \eqref{x*}, we have
\begin{equation}\label{x^*t_s}
\begin{split}
x^{\star}\epsilon=&-\frac{\epsilon}{2}+\sqrt{\frac{\epsilon^2}{4}+\epsilon}
=\frac{\epsilon}{\frac{\epsilon}{2}+\sqrt{\frac{\epsilon^2}{4}+\epsilon}}
=\sqrt{\epsilon}+o(\sqrt{\epsilon}).
\end{split}
\end{equation}
Moreover,
\begin{equation}\label{x^*ET}
\begin{split}
x^\star=&-\frac{1}{2}+\sqrt{\frac{1}{4}+\frac{1}{\epsilon}}
=\frac{\frac{1}{\epsilon}}{\frac{1}{2} + \sqrt{\frac{1}{4} + \frac{1}{\epsilon}}}
=\frac{1}{\sqrt{\epsilon}} + o\left(\frac{1}{\sqrt{\epsilon}}\right). 
\end{split}
\end{equation}
 Substituting \eqref{x^*t_s} and~\eqref{x^*ET} in \eqref{G1}, we obtain
\begin{align}
\vert\bar{\Delta}^{\text{peak}}(\mathbf{r}^{\star})-\bar{\Delta}^{\text{peak}}_{\text{opt}}\vert&
\le \sum_{i=1}^M\frac{w_i[e^{\sqrt{\epsilon} + o(\sqrt{\epsilon})} (1 + \sqrt{\epsilon} + o(\sqrt{\epsilon})) - 1 ]}{\min\{b_i,\beta^{\star}\sqrt{w_i}\}}\nonumber\\&
=\sum_{i=1}^M\frac{w_i[(1\!+\! \sqrt{\epsilon}\! +\!o(\sqrt{\epsilon}))(1 \!+\! \sqrt{\epsilon} \!+\! o(\sqrt{\epsilon})) \!-\! 1 ]}{\min\{b_i,\beta^{\star}\sqrt{w_i}\}}\nonumber\\&
=2 \sqrt{\epsilon} \sum_{i=1}^M\frac{w_i}{\min\{b_i,\beta^{\star}\sqrt{w_i}\}}  + o(\sqrt{\epsilon}),
\end{align}
where the second inequality involves the use of Taylor expansion. This proves Theorem \ref{thm_case_sum_bi_ge_1}. Moreover, we can observe that the gap $\vert\bar{\Delta}^{\text{peak}}(\mathbf{r}^{\star})-\bar{\Delta}^{\text{peak}}_{\text{opt}}\vert$ in the energy-adequate regime converges to zero at a speed of $O(\sqrt{\epsilon})$, as $\epsilon \to 0$. We also observe that both the upper and lower bounds \eqref{upper_bound_sumb_i_ge_1}, \eqref{lower_bound_sum_bi_ge_1}, converge to $\sum_{i=1}^M[(w_i/\min\{b_i,\beta^{\star}\sqrt{w_i}\})+w_i]$ as $t_s/\mathbb{E}[T]\rightarrow 0$. Thus, this value is the asymptotic optimal value of Problem \textbf{1}. This proves Corollary \ref{cor1}. 
\fi

\ifreport
\subsection{The Proofs of Theorem \ref{thm_case_sum_bi_le_1} and Corollary \ref{cor2}}\label{proof_case_sum_bi_le_1}
Similar to Section \ref{proof_case_sum_bi_ge_1}, we prove Theorem \ref{thm_case_sum_bi_le_1} and
Corollary \ref{cor2} in also three steps:

\textbf{Step 1}: We show that the proposed solution $\mathbf{r}^{\star}$ given by \eqref{r_f_b_gr_1} and  \eqref{condition_x*_beta*} - \eqref{feas_fact_eq27} is a feasible solution for Problem \textbf{1}.
\begin{lemma}\label{lemma_feasibility2}
If $\sum_{i=1}^Mb_i< 1$, then the solution $\mathbf{r}^{\star}$ given by \eqref{r_f_b_gr_1} and  \eqref{condition_x*_beta*} - \eqref{feas_fact_eq27} is feasible for Problem \textbf{1}.
\end{lemma}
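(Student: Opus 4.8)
The plan is to substitute $\mathbf{r}^{\star}$ into the energy constraint of Problem \textbf{1} and verify it directly. First I would record two preliminary observations. Since $\sum_{i=1}^Mb_i<1$ we have $b_l<1$ for every $l$, and since $\beta^{\star}=\sum_{i=1}^M 1/\sqrt{w_i}\ge 1/\sqrt{w_l}$ we get $\beta^{\star}\sqrt{w_l}\ge 1>b_l$; hence $\min\{b_l,\beta^{\star}\sqrt{w_l}\}=b_l$ for all $l$, and the solution \eqref{r_f_b_gr_1} reduces to $r^{\star}_l=b_l x^{\star}$. Moreover $x^{\star}=\min_l c_l/(1-\sum_{i}b_i)>0$ because each $c_l>0$ and $\sum_i b_i<1$, so $r^{\star}_l>0$ as Problem \textbf{1} requires.

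Next I would plug $r^{\star}_l=b_l x^{\star}$ into the constraint \eqref{energy_const_2}. Write $\epsilon=t_s/\mathbb{E}[T]$ and $B=\sum_{i=1}^Mb_i$, and note $\sum_{i=1}^M r^{\star}_i=x^{\star}B$. Since the denominator $x^{\star}B+1>0$, clearing it and collecting terms shows that $\sigma_l\le b_l$ is equivalent to
\[
x^{\star}B(1-b_l)\ \le\ b_l+e^{-b_l x^{\star}\epsilon}\,x^{\star}(B-b_l).
\]
Since $B-b_l=\sum_{i\ne l}b_i\ge 0$, I would invoke the elementary bound $e^{-y}\ge 1-y$ to lower-bound the right-hand side; after simplification it then suffices to prove the quadratic inequality
\[
f_l(x^{\star})\ :=\ \epsilon(B-b_l)\,(x^{\star})^2+(1-B)\,x^{\star}-1\ \le\ 0,\qquad\forall l.
\]

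Finally I would identify the nonnegative root of $f_l$. Pulling the factor $b_l^2(1-B)^2$ out of the radical in \eqref{feas_fact_eq27} gives $Q=b_l(1-B)\big[(1-B)+\sqrt{(1-B)^2+4(B-b_l)\epsilon}\big]$, so by \eqref{feasible_factor} we get $c_l=2(1-B)/\big[(1-B)+\sqrt{(1-B)^2+4(B-b_l)\epsilon}\big]$; rationalizing the usual root formula for $f_l$ shows its nonnegative root is exactly $x_l^{+}=c_l/(1-B)$ (and $x_l^+=1/(1-B)=c_l/(1-B)$ in the degenerate case $M=1$, where $f_l$ is affine with $B-b_l=0$). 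Since $f_l$ is convex with $f_l(0)=-1<0$, we have $f_l(x)\le 0$ on $[0,x_l^{+}]$; and because $x^{\star}=\min_l c_l/(1-B)\le c_l/(1-B)=x_l^{+}$ with $x^{\star}>0$, it follows that $f_l(x^{\star})\le 0$ for every $l$. Hence $\sigma_l\le b_l$ for all $l$, i.e.\ $\mathbf{r}^{\star}$ is feasible for Problem \textbf{1}.

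The main obstacle is this last step: recognizing that the opaque quantity $c_l/(1-B)$ defined through \eqref{feasible_factor}--\eqref{feas_fact_eq27} is precisely the nonnegative root of the reduced quadratic $f_l$. Once the radical in $Q$ is simplified and the root formula rationalized, this is a short computation, and it is exactly what makes the choice of $x^{\star}$ in \eqref{condition_x*_beta*} the right one; everything else is routine algebra together with the convexity bound $e^{-y}\ge 1-y$.
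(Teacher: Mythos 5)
Your proof is correct and follows essentially the same route as the paper's: both linearize the exponential via $e^{-y}\ge 1-y$, reduce the energy constraint to the per-source quadratic whose nonnegative root is $c_l/(1-\sum_i b_i)$, and conclude by taking the minimum over $l$. The only cosmetic difference is that you justify the final step by convexity of $f_l$ together with $f_l(0)<0$, whereas the paper shows the constraint's left-hand side is increasing in the scaling parameter $c$ by an explicit derivative computation.
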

\begin{proof}
See Appendix \ref{Appendix_C}.
\end{proof}
Now, we construct an upper bound on the optimal value of Problem \textbf{1} using our proposed solution as follows:
\begin{lemma}\label{lemma_upper_bound_sum_i_b_i_le1}
If $\sum_{i=1}^Mb_i< 1$, then 
\begin{align}\label{upper_bound_sumb_i_le_1}
\begin{split}
\bar{\Delta}^{\text{peak}}_{\text{opt}}\le\bar{\Delta}^{\text{peak}}(\mathbf{r}^{\star})\le& \sum_{l=1}^M\!\frac{w_l}{b_l}e^{\sum_{i=1}^Mb_ix^{\star}\frac{t_s}{\mathbb{E}[T]}}\left(\frac{1}{x^{\star}}+\sum_{i=1}^Mb_i\!\right)\\&+\sum_{l=1}^Mw_l,
\end{split}
\end{align}
where $x^{\star}$ is defined in \eqref{condition_x*_beta*}. 
\end{lemma}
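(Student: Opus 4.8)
The plan is to mirror the argument used for Lemma \ref{lemma_upper_bound_sum_i_b_i_ge1} in the energy-adequate regime. First I would invoke Lemma \ref{lemma_feasibility2}, which asserts that the proposed $\mathbf{r}^{\star}$ given by \eqref{r_f_b_gr_1} and \eqref{condition_x*_beta*}--\eqref{feas_fact_eq27} is feasible for Problem \textbf{1}. Feasibility immediately implies that the value of the objective of \eqref{problem1} evaluated at $\mathbf{r}^{\star}$, namely $\bar{\Delta}^{\text{peak}}(\mathbf{r}^{\star})$, is an upper bound for the optimum $\bar{\Delta}^{\text{peak}}_{\text{opt}}$. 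It then remains only to massage this value into the stated closed form and relax it slightly.

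The key observation for the simplification is that in the energy-scarce regime $\beta^{\star}=\sum_{i=1}^M 1/\sqrt{w_i}$, and since $\sum_{i=1}^M b_i<1$ forces $b_l<1$ for every $l$, we get $\beta^{\star}\sqrt{w_l}=\sqrt{w_l}\sum_{i=1}^M 1/\sqrt{w_i}\ge 1>b_l$, so that $\min\{b_l,\beta^{\star}\sqrt{w_l}\}=b_l$ for all $l$. Consequently \eqref{r_f_b_gr_1} reduces to $r^{\star}_l=b_l x^{\star}$ and $\sum_{i=1}^M r^{\star}_i=x^{\star}\sum_{i=1}^M b_i$. Substituting these into the objective of \eqref{problem1} and factoring out $x^{\star}$ yields
\[
\bar{\Delta}^{\text{peak}}(\mathbf{r}^{\star})=\sum_{l=1}^M\frac{w_l}{b_l}\,e^{-b_l x^{\star}\frac{t_s}{\mathbb{E}[T]}}\,e^{\sum_{i=1}^M b_i x^{\star}\frac{t_s}{\mathbb{E}[T]}}\left(\frac{1}{x^{\star}}+\sum_{i=1}^M b_i\right)+\sum_{l=1}^M w_l .
\]

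Finally, note that $x^{\star}=\min_l c_l/(1-\sum_{i=1}^M b_i)>0$ because each $c_l>0$ by \eqref{feasible_factor}--\eqref{feas_fact_eq27} and $1-\sum_{i=1}^M b_i>0$ in this regime; hence the exponent $-b_l x^{\star}(t_s/\mathbb{E}[T])$ is negative and $e^{-b_l x^{\star}(t_s/\mathbb{E}[T])}\le 1$ for every $l$. Bounding each such factor by $1$ in the display above gives exactly the inequality \eqref{upper_bound_sumb_i_le_1}, completing the proof. I do not anticipate a genuine obstacle here: the only step that needs a line of care is verifying $\min\{b_l,\beta^{\star}\sqrt{w_l}\}=b_l$, which is what makes the substitution of $\mathbf{r}^{\star}$ collapse cleanly; everything after that is the algebraic factorization and the elementary bound $e^{-x}\le 1$ for $x\ge 0$.
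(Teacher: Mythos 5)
Your proposal is correct and follows essentially the same route as the paper: invoke Lemma~\ref{lemma_feasibility2} for feasibility, substitute $\mathbf{r}^{\star}$ into the objective of \eqref{problem1}, and then bound $e^{-b_l x^{\star} t_s/\mathbb{E}[T]}$ by $1$. Your explicit verification that $\min\{b_l,\beta^{\star}\sqrt{w_l}\}=b_l$ in the energy-scarce regime (which the paper states separately rather than inside this proof) is a welcome extra line of care, but the argument is otherwise identical.
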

\begin{proof}
In Lemma \ref{lemma_feasibility2}, we showed that our proposed solution $\mathbf{r}^{\star}$ given by  \eqref{r_f_b_gr_1} and  \eqref{condition_x*_beta*} - \eqref{feas_fact_eq27} is feasible for Problem \textbf{1}. Hence, we substitute this solution into Problem \textbf{1} to obtain the following upper bound:
\begin{align}
\sum_{l=1}^M\!\frac{w_le^{-b_lx^\star\frac{t_s}{\mathbb{E}[T]}}}{b_l}e^{\sum_{i=1}^Mb_ix^{\star}\frac{t_s}{\mathbb{E}[T]}}\left(\frac{1}{x^{\star}}+\sum_{i=1}^Mb_i\!\right)+\sum_{l=1}^Mw_l.
\end{align}
Next, we replace $e^{-b_lx^\star\frac{t_s}{\mathbb{E}[T]}}$ by 1 
to derive another upper bound with a simple expression, which is given by \eqref{upper_bound_sumb_i_le_1}. This completes the proof. 
\end{proof}

\textbf{Step 2}: Similar to the proof in Section \ref{proof_case_sum_bi_ge_1}, we use the relaxed problem, Problem \textbf{2}, to construct a lower bound as follows:
\begin{lemma}\label{lemma_lower_bound_sum_i_b_i_le_1}
If $\sum_{i=1}^Mb_i< 1$, then 
\begin{align}
\bar{\Delta}^{\text{peak}}_{\text{opt}}\ge\bar{\Delta}^{\text{peak}}_{\text{opt},2}\ge\sum_{l=1}^M\frac{w_l}{b_l}e^{\frac{-\sum_{i=1}^Mb_i}{1-\sum_{i=1}^Mb_i}\frac{t_s}{\mathbb{E}[T]}}+\sum_{l=1}^Mw_l.\label{lower_bound_sum_bi_le_1}
\end{align}
\end{lemma}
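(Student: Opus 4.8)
The plan is to follow the same three-step template used for the energy-adequate regime, now specialized to $\sum_i b_i<1$. Since the present lemma is the ``Step 2'' (lower bound) of that template, it suffices to (i) recall that Problem \textbf{2} relaxes Problem \textbf{1}, so $\bar{\Delta}^{\text{peak}}_{\text{opt}}\ge\bar{\Delta}^{\text{peak}}_{\text{opt},2}$, and (ii) lower bound the optimal value of Problem \textbf{2} by the quantity on the right of \eqref{lower_bound_sum_bi_le_1}. For (i) I would reuse the observation already made when Problem \textbf{2} is introduced: writing $\epsilon=t_s/\mathbb{E}[T]$ and $R=\sum_i r_i$, one has $[1-e^{-r_l\epsilon}]R+r_le^{-r_l\epsilon}=R-e^{-r_l\epsilon}(R-r_l)\ge r_l$ because $e^{-r_l\epsilon}\le 1$ and $R\ge r_l$; hence every $\mathbf r$ feasible for Problem \textbf{1} satisfies $r_l\le b_l(R+1)$, so Problem \textbf{2} has the larger feasible set and its optimum is no larger than $\bar{\Delta}^{\text{peak}}_{\text{opt}}$.

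The key new ingredient, which is what distinguishes the energy-scarce regime, is that the relaxed constraint now bounds $R$ \emph{from above}. Summing $r_l\le b_l(R+1)$ over $l$ gives $R\le(R+1)\sum_i b_i$, i.e.\ $R(1-\sum_i b_i)\le\sum_i b_i$; since $\sum_i b_i<1$ this yields $R\le R_{\max}:=\frac{\sum_i b_i}{1-\sum_i b_i}$. In particular every feasible $\mathbf r$ of Problem \textbf{2} has $r_l\le R\le R_{\max}$ for all $l$.

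With this in hand I would lower bound the objective of Problem \textbf{2} pointwise over its feasible set by three elementary moves: first $e^{R\epsilon}\ge 1$ drops that factor; second $r_l\le R_{\max}$ gives $e^{-r_l\epsilon}\ge e^{-R_{\max}\epsilon}=e^{-\frac{\sum_i b_i}{1-\sum_i b_i}\epsilon}$, pulling out a common factor; third the relaxed constraint gives $\frac{1}{r_l}\ge\frac{1}{b_l(R+1)}$, hence $(1+R)\sum_l\frac{w_l}{r_l}\ge\sum_l\frac{w_l}{b_l}$. Combining,
\begin{align*}
\sum_{l}\frac{w_le^{-r_l\epsilon}}{r_l}e^{R\epsilon}(1+R)+\sum_l w_l
&\ge e^{-R_{\max}\epsilon}(1+R)\sum_l\frac{w_l}{r_l}+\sum_l w_l\\
&\ge e^{-R_{\max}\epsilon}\sum_l\frac{w_l}{b_l}+\sum_l w_l,
\end{align*}
which is exactly the right-hand side of \eqref{lower_bound_sum_bi_le_1}. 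Since this holds for every feasible $\mathbf r$, taking the infimum preserves it and bounds $\bar{\Delta}^{\text{peak}}_{\text{opt},2}$, hence $\bar{\Delta}^{\text{peak}}_{\text{opt}}$.

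I do not expect a genuine obstacle: the only things to watch are the monotonicity directions — we need an \emph{upper} bound on $r_l$ to get a \emph{lower} bound on $e^{-r_l\epsilon}$, and the relaxed constraint conveniently supplies exactly that through $R_{\max}$ — and noting that all inequalities hold uniformly over the feasible set of Problem \textbf{2}, so no attainment or compactness argument is needed before passing to the infimum. Letting $\epsilon\to 0$ in this bound, together with the matching upper bound of Lemma \ref{lemma_upper_bound_sum_i_b_i_le1}, then yields the asymptotic value $\sum_i[w_i/b_i+w_i]$ in \eqref{asymptotic_value_final_2}, giving Corollary \ref{cor2}; and quantifying the gap between the two bounds by Taylor expansion of $e^{x^{\star}\epsilon}$ and $e^{-R_{\max}\epsilon}$, exactly as in Step 3 of Section \ref{proof_case_sum_bi_ge_1}, gives the $O(\epsilon)$ rate claimed in Theorem \ref{thm_case_sum_bi_le_1}.
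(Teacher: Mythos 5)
Your proposal is correct, and it reaches the bound by a noticeably more elementary route than the paper. Both proofs share the first move (Problem \textbf{2} relaxes Problem \textbf{1}, so $\bar{\Delta}^{\text{peak}}_{\text{opt}}\ge\bar{\Delta}^{\text{peak}}_{\text{opt},2}$), but they diverge on how the relaxed problem is bounded below. The paper replaces $e^{-r_l t_s/\mathbb{E}[T]}/r_l$ by $e^{-\sum_i r_i t_s/\mathbb{E}[T]}/[b_l(\sum_i r_i+1)]$ and $e^{\sum_i r_i t_s/\mathbb{E}[T]}$ by $1$ to form a further-relaxed optimization problem, then takes a logarithm, writes the Lagrangian, and solves the KKT system to show the minimizer is $r_l=b_l/(1-\sum_i b_i)$, whose objective value is the stated bound. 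You instead observe that summing the relaxed constraints $r_l\le b_l(\sum_i r_i+1)$ forces $\sum_i r_i\le R_{\max}=\tfrac{\sum_i b_i}{1-\sum_i b_i}$ (here the hypothesis $\sum_i b_i<1$ is used), and then bound the objective pointwise on the feasible set via $e^{\sum_i r_i t_s/\mathbb{E}[T]}\ge 1$, $e^{-r_l t_s/\mathbb{E}[T]}\ge e^{-R_{\max}t_s/\mathbb{E}[T]}$, and $(1+\sum_i r_i)/r_l\ge 1/b_l$; all three estimates are valid (the second because $r_l\le\sum_i r_i\le R_{\max}$, the third being exactly the relaxed constraint), and a uniform pointwise bound passes to the infimum with no attainment argument. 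In effect you have collapsed the paper's ``solve the further-relaxed convex program exactly'' step into ``bound $\sum_i r_i$ from above by summing the constraints,'' which buys a shorter proof with no duality machinery; what the paper's KKT computation buys in exchange is the explicit maximizer $r_l=b_l/(1-\sum_i b_i)$, which it reuses when interpreting the asymptotic behavior of $r_l^\star$ in the energy-scarce regime. Your concluding remarks about recovering Corollary \ref{cor2} and the $O(t_s/\mathbb{E}[T])$ gap of Theorem \ref{thm_case_sum_bi_le_1} also match the paper's Step 3.
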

\begin{proof}
See Appendix \ref{Appendix_D}.
\end{proof}

\textbf{Step 3}: We now characterize the sub-optimality gap by analyzing the upper and lower bounds constructed above. By combining \eqref{upper_bound_sumb_i_le_1} and \eqref{lower_bound_sum_bi_le_1}, the sub-optimality gap of the solution $\mathbf{r}^\star$ given by \eqref{r_f_b_gr_1} and  \eqref{condition_x*_beta*} - \eqref{feas_fact_eq27} is upper bounded by
\begin{align}\label{G2}
\begin{split}
&\vert\bar{\Delta}^{\text{peak}}(\mathbf{r}^{\star})-\bar{\Delta}^{\text{peak}}_{\text{opt}}\vert\\&\le \sum_{l=1}^M\frac{w_l}{b_l}\!\left[ e^{\sum_{i=1}^Mb_ix^{\star}\frac{t_s}{\mathbb{E}[T]}}\left(\frac{1}{x^{\star}}\!+\!\sum_{i=1}^Mb_i\right)\!-\!e^{\frac{-\sum_{i=1}^Mb_i}{1-\sum_{i=1}^Mb_i}\frac{t_s}{\mathbb{E}[T]}}\right].
\end{split}
\end{align}
where $x^{\star}$ is defined in \eqref{condition_x*_beta*}. Next, we characterize the RHS of \eqref{G2} by Taylor expansion. For simplicity, let $\epsilon=t_s/\mathbb{E}[T]$, $Z=(\sum_{i=1}^Mb_i)/(1-\sum_{i=1}^Mb_i)$, and  $k_l=(\sum_{i=1}^Mb_i-b_l)/(1-\sum_{i=1}^Mb_i)^2$. Using Taylor expansion, we are able to obtain the following:
\begin{align}
&\min_l c_l=1+\left(\min_l k_l\right)\epsilon+o(\epsilon),\label{eq37}\\&
\frac{1}{\min_lc_l}=\max_l\frac{1}{c_l}=1+\left(\max_lk_l\right)\epsilon+o(\epsilon).\label{eq38}
\end{align}
Using \eqref{eq37}, \eqref{eq38}, $x^{\star}$ from \eqref{condition_x*_beta*}, and Taylor expansion again, we get
\begin{align}
\begin{split}
e^{\sum_{i=1}^Mb_ix^{\star}\epsilon}&=1+Z\left(1+\left(\min_lk_l\right)\epsilon+o(\epsilon)\right)\epsilon+o(\epsilon)\\&=1+Z\epsilon+o(\epsilon),
\end{split}\label{eq39}\\
\begin{split}
\frac{1}{x^{\star}}+\sum_{i=1}^Mb_i&=\frac{1-\sum_{i=1}^Mb_i}{\min_lc_l}+\sum_{i=1}^Mb_i\\&=1+\left(\max_lk_l\right)\left(1-\sum_{i=1}^Mb_i\right)\epsilon+o(\epsilon),
\end{split}\\
e^{-Z\epsilon}&=1-Z\epsilon+o(\epsilon).\label{eq41}
\end{align}
Substituting \eqref{eq39} - \eqref{eq41} into \eqref{G2}, we get \eqref{sub_opt_gap_eq_2}. This proves Theorem \eqref{thm_case_sum_bi_le_1}. 

Moreover, we observe that the gap $\vert\bar{\Delta}^{\text{peak}}(\mathbf{r}^{\star})-\bar{\Delta}^{\text{peak}}_{\text{opt}}\vert$ in the energy-scarce regime converges to zero at a speed of $O(\epsilon)$, as $\epsilon \to 0$. We also observe that both the upper and lower bounds \eqref{upper_bound_sumb_i_le_1}, \eqref{lower_bound_sum_bi_le_1}, converge to $\sum_{i=1}^M[(w_i/b_i)+w_i]$ as $t_s/\mathbb{E}[T]\to 0$. Thus, this value is the asymptotic optimal value of Problem \textbf{1} in this case. This proves Corollary \ref{cor2}.
\fi
\section{Numerical Results}\label{Simulations}
We use Matlab to evaluate the performance of our algorithm.
We use ``Age-optimal scheduler'' to denote the sleep-wake scheduler with the sleep period paramters $r_l^{\star}$'s as in \eqref{r_f_b_gr_1}, which was shown to be near-optimal in Theorem \ref{thm_case_sum_bi_ge_1} and Theorem \ref{thm_case_sum_bi_le_1}. By ``Throughput-optimal scheduler'', we  refer to the sleep-wake algorithm of~\cite{chen2013life} that is known to achieve the optimal trade-off between the throughput and energy consumption reduction. Moreover, we use ``Fixed sleep-rate scheduler'' to denote the sleep-wake scheduler in which the sleep period parameters $r_l$'s are equal for all the sources, i.e., $r_l=k$ for all $l$,  where the parameter $k$ has been chosen so as to satisfy the energy constraints of Problem \textbf{1}.
We also let $\bar{\Delta}_{\text{un}}^{\text{peak}}(\mathbf{r})$ denote the unnormalized total weighted average peak age in \eqref{t_avg_peak_age}. Finally, we would like to mention that we do not compare the performance of our proposed algorithm with the CSMA algorithms of~\cite{maatouk2019minimizing,wang2019broadcast} since the objective of these works was solely to minimize the age. Since they do not incorporate energy constraints, it is not fair to compare the performance of our algorithm with them. 

Unless stated otherwise, our set up is as follows: The average transmission time is $\mathbb{E}[T]=5$ ms. The weights $w_l$'s attached to different sources are generated by sampling from a uniform distribution in the interval $[0, 10]$. The target energy efficiencies $b_l$'s are randomly generated uniformly  within the range $[0, 1]$. 


\begin{figure}[t]
\centering
\includegraphics[scale=0.296]{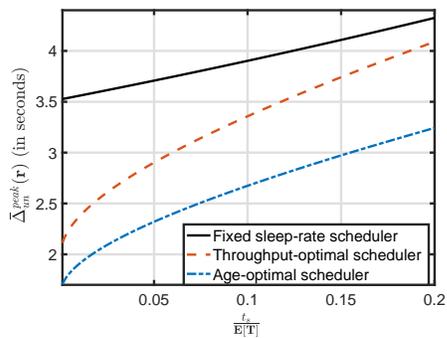}
\centering
\captionsetup{justification=justified}
\caption{Total weighted average peak age $\bar{\Delta}_{\text{un}}^{\text{peak}}(\mathbf{r})$ in \eqref{t_avg_peak_age} versus the ratio $\frac{t_s}{\mathbb{E}[T]}$ for $M=10$ sources.}
\label{peak_age_comp}
\end{figure}
We set the number of sources at $M=10$. 
Figure~\ref{peak_age_comp} plots the total weighted average peak age $\bar{\Delta}_{\text{un}}^{\text{peak}}(\mathbf{r})$ in  \eqref{t_avg_peak_age} as a function of the ratio $\frac{t_s}{\mathbb{E}[T]}$. The age-optimal scheduler is seen to outperform the throughput-optimal and Fixed sleep-rate schedulers. This implies that what minimizes the throughput does not necessarily minimize AoI and vice versa. 
 Moreover, we observe that the total weighted average peak age of all schedulers increases as the sensing time increases. This is expected since an increase in the sensing time leads to an increase in the probability of packet collisions, which in turn deteriorates the age performance of these schedulers.

\begin{figure}[t]
\centering
\includegraphics[scale=0.296]{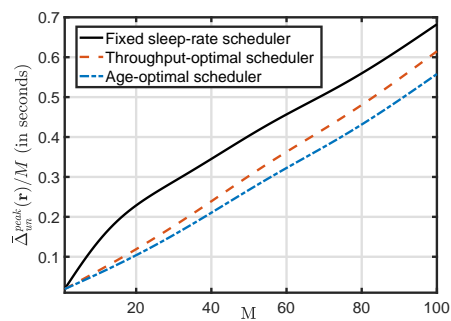}
\centering
\captionsetup{justification=justified}
\caption{Total weighted average peak age $\bar{\Delta}_{\text{un}}^{\text{peak}}(\mathbf{r})$ in \eqref{t_avg_peak_age} versus the number of sources $M$, where $\bar{\Delta}_{\text{un}}^{\text{peak}}(\mathbf{r})$ has been normalized by $M$ while plotting.}
\label{peak_age_comp_num}
\end{figure}
We then scale the number of sources $M$, and plot $\bar{\Delta}_{\text{un}}^{\text{peak}}(\mathbf{r})$  in  \eqref{t_avg_peak_age} as a function of $M$ in Figure~\ref{peak_age_comp_num}. While plotting, we normalize the performance by the number of sources $M$. The sensing time $t_s$ is fixed at $t_s=40~\mu$s. The weights $w_l$'s corresponding to different sources are randomly generated uniformly  within the range $[0, 2]$. The age-optimal scheduler is shown to outperform other schedulers uniformly for all values of $M$. Moreover, as we can observe, the average peak age of the sources under age-optimal scheduler increases up to around 0.55 seconds only, while the number of sources rises from 1 to 100. This indicates the  robustness of our algorithm to changes in the number of sources in a network.

\begin{figure}[t]
\centering
\includegraphics[scale=0.296]{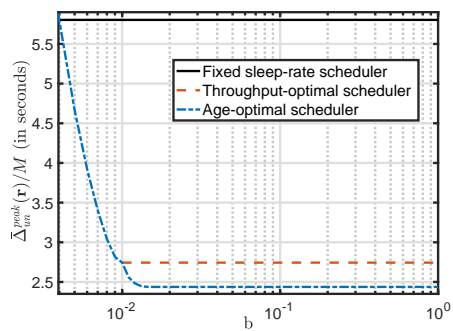}
\centering
\captionsetup{justification=justified}
\caption{Total weighted average peak age $\bar{\Delta}_{\text{un}}^{\text{peak}}(\mathbf{r})$ in \eqref{t_avg_peak_age} versus the target energy efficiency $b$ for  $M=100$ sources, where $\bar{\Delta}_{\text{un}}^{\text{peak}}(\mathbf{r})$ has been normalized by $M$ while plotting.}
\label{peak_age_comp_lower_bound}
\end{figure}
In Figure \ref{peak_age_comp_lower_bound}, we fix the value of $M$ at $100$ sources and the target energy efficiencies at the same value for all the sources, i.e., $b_l= b$ for all $l$. We then vary the parameter $b$ and plot the resulting performances. While plotting, we normalize the performance by the number of sources $M$.  We exclude the simulation of the throughput-optimal scheduler for $b< 0.01$ since the sleeping period parameters that are proposed in \cite{chen2013life} are not feasible for Problem \textbf{1} in energy-scarce regime, i.e., when $\sum_{i=1}^Mb_i<1$. The age-optimal scheduler outperforms the rest of the schedulers. Moreover, its performance is a decreasing function of $b$, and then settles at a constant value. This occurs because we observe from~\eqref{r_f_b_gr_1} that there exists a value for $b$ after which our proposed solution value,  $\mathbf{r}^{\star}$, is a function solely of weights $w_l$'s and $\beta^\star$, and not of $b$. Thus, the performance of the proposed scheduler saturates after this value of $b$.

\begin{figure}[t]
\centering
\includegraphics[scale=0.296]{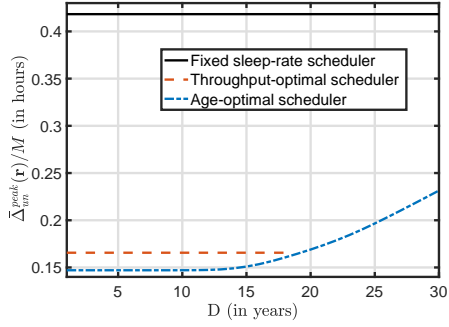}
\centering
\captionsetup{justification=justified}
\caption{Total weighted average peak age $\bar{\Delta}_{\text{un}}^{\text{peak}}(\mathbf{r})$ in \eqref{t_avg_peak_age} versus the target lifetime $D$ for a dense network with number of sources $M=10^5$, where $\bar{\Delta}_{\text{un}}^{\text{peak}}(\mathbf{r})$ has been normalized by $M$ while plotting. Since the throughput--optimal scheduler is infeasible for values of $D$ greater than $18$ years, we do not plot its performance for these values.}
\label{peak_age_comp_dense_network}
\end{figure}
We now show the effectiveness of the proposed scheduler when deployed in ``dense networks''~\cite{kowshik2019energy,kowshik2019fundamental}. Dense networks are characterized by a large number of sources connected to a single AP. We fix $M$ at $10^5$ sources, and take the target lifetimes of the sources to be equal, i.e., $D_l= D$ for all $l$. The weights $w_l$'s corresponding to different sources are generated randomly by sampling from the uniform distribution in the range $[0, 2]$. We let the initial battery level $B_l= 8$ mAh for all $l$ and the output voltage is 5 Volt. We also let the energy consumption in a transmission mode to be 24.75 mW for all sources. We vary the parameter $D$ and plot the resulting performance in Figure \ref{peak_age_comp_dense_network}. While plotting, we normalize the performance by the number of sources $M$. We exclude simulations for the throughput-optimal scheduler for values of $D$ for which the scheduler is infeasible, i.e., its cumulative energy consumption exceeds the total allowable energy consumption. The age-optimal scheduler is seen to outperform the others. 
As observed in Figure~\ref{peak_age_comp_dense_network}, under the age-optimal scheduler, sources can be active for up to 25 years, while simultaneously achieving a decent average peak age of around .2 hour, i.e., 12 minutes. This makes it apt for dense networks, where it is crucial that the sources are necessarily active for many years.
\subsection{NS-3 Simulation}\label{ns3sim}
We use NS-3 \cite{ns_3_2_0} to investigate the effect of our assumptions on the performance of the age-optimal scheduler in a more practical situation. We simulate the Age-optimal scheduler by using IEEE 802.11b by disabling the RTS-CTS and modifying the back-off times to be exponentially distributed in the MAC layer. Our simulation results are averaged over 5 system realizations. The UDP saturation conditions are satisfied such that all source nodes always have a packet to send.

 Our simulation consists of a WiFi network with 1 AP and 3 associated source nodes in a field of size 50m $\times$ 50m. We set the sensing threshold to -100 dBm which covers a range of 110m. Thus, all sources can hear each other. We set the initial battery level of all source to be 60 mAh, where the output voltage is 5 Volt. For each source, the power consumption in the transmission mode is 24.75 mW, and the power consumption in the sleep mode is 15 $\mu$W. Moreover, all weights are set to unity, i.e., $w_l=1$ for all $l$.

\begin{figure}[t]
\centering
\includegraphics[scale=0.296]{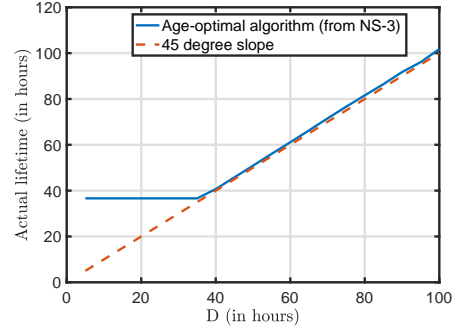}
\centering
\captionsetup{justification=justified}
\caption{The average actual lifetime versus the target lifetime $D$.}
\label{ns3_actual_lifetime}
\end{figure}
Figure \ref{ns3_actual_lifetime} plots the average actual lifetime of the sources versus the target lifetime, where we take the target lifetimes of all sources to be equal, i.e., $D_l=D$ for all $l$. As we can observe, the actual lifetime of the age-optimal scheduler always achieves the target lifetime. This suggests that our assumptions (i.e.,  (i) omitting the power dissipation in the sleep mode and in the sensing times, (ii) the average transmission times and collision times  are equal to each other) do not affect the performance of the algorithm which reaches its target lifetime.

\begin{figure}[t]
\centering
\includegraphics[scale=0.296]{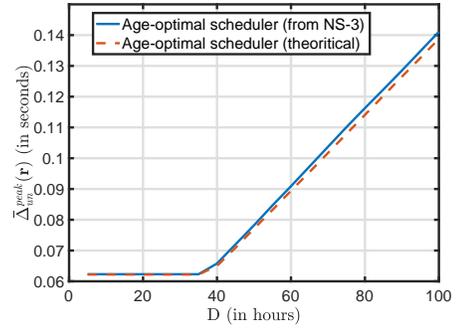}
\centering
\captionsetup{,justification=justified}
\caption{Total weighted average peak age $\bar{\Delta}_{\text{un}}^{\text{peak}}(\mathbf{r})$ versus the target lifetime $D$.}
\label{ns3_total_avg_peak}
\end{figure}
Figure \ref{ns3_total_avg_peak} plots the total weighted average peak age versus the target lifetime, where again we take the target lifetimes of all sources to be equal, i.e., $D_l=D$ for all $l$. The age-optimal scheduler (theoretical) curve is obtained using  \eqref{t_avg_peak_age}, while
the age-optimal scheduler (from NS-3) curve is obtained using the NS-3 simulator.  As we can observe, the difference between the plotted curves does not exceed 2\% of the age-optimal scheduler (theoretical) performance.
This emphasizes the negligible impact of our assumptions on the performance of our proposed algorithm.

\section{Conclusions}\label{Concl}
We designed an efficient sleep-wake mechanism for wireless networks that attains the optimal trade-off between minimizing the AoI and energy consumption. Since the associated optimization problem is non-convex, in general we could not hope to solve it for all values of the system parameters. However, in the regime when the carrier sensing time $t_s$ is negligible as compared to the average transmission time $\mathbb{E}[T]$, we were able to provide a near-optimal solution. Moreover, the proposed solution is on a simple form that allowed us to design a simple-to-implement algorithm to obtain its value. Finally, we showed that, in the energy-adequate regime, the performance of our proposed algorithm is asymptotically no worse than that of the optimal synchronized scheduler, as $t_s/\mathbb{E}[T]\to 0$. 


\begin{acks}
The authors appreciate Jiayu Pan and Shaoyi Li for their great efforts in obtaining the ns-3 simulation results. 
\end{acks} 
\bibliographystyle{ACM-Reference-Format}
\bibliography{MyLib}


\begin{thebibliography}{00}


\ifx \showCODEN    \undefined \def \showCODEN     #1{\unskip}     \fi
\ifx \showDOI      \undefined \def \showDOI       #1{#1}\fi
\ifx \showISBNx    \undefined \def \showISBNx     #1{\unskip}     \fi
\ifx \showISBNxiii \undefined \def \showISBNxiii  #1{\unskip}     \fi
\ifx \showISSN     \undefined \def \showISSN      #1{\unskip}     \fi
\ifx \showLCCN     \undefined \def \showLCCN      #1{\unskip}     \fi
\ifx \shownote     \undefined \def \shownote      #1{#1}          \fi
\ifx \showarticletitle \undefined \def \showarticletitle #1{#1}   \fi
\ifx \showURL      \undefined \def \showURL       {\relax}        \fi
\providecommand\bibfield[2]{#2}
\providecommand\bibinfo[2]{#2}
\providecommand\natexlab[1]{#1}
\providecommand\showeprint[2][]{arXiv:#2}

\bibitem[\protect\citeauthoryear{Bedewy, Sun, Kompella, and Shroff}{Bedewy
  et~al\mbox{.}}{}]%
        {multi_source_bedewy_2}
\bibfield{author}{\bibinfo{person}{A.~M. Bedewy}, \bibinfo{person}{Y. Sun},
  \bibinfo{person}{S. Kompella}, {and} \bibinfo{person}{N.~B. Shroff}.}
\newblock \showarticletitle{Age-optimal Sampling and Transmission Scheduling in
  Multi-Source Systems}. In \bibinfo{booktitle}{{\em Proc. MobiHoc}}.
  \bibinfo{pages}{121--130}.
\newblock


\bibitem[\protect\citeauthoryear{Bedewy, Sun, and Shroff}{Bedewy
  et~al\mbox{.}}{2016}]%
        {age_optimality_multi_server}
\bibfield{author}{\bibinfo{person}{A.~M. Bedewy}, \bibinfo{person}{Y. Sun},
  {and} \bibinfo{person}{N.~B. Shroff}.} \bibinfo{year}{2016}\natexlab{}.
\newblock \showarticletitle{Optimizing data freshness, throughput, and delay in
  multi-server information-update systems}. In \bibinfo{booktitle}{{\em
  Proc.~IEEE ISIT}}. \bibinfo{pages}{2569--2573}.
\newblock


\bibitem[\protect\citeauthoryear{Bedewy, Sun, and Shroff}{Bedewy
  et~al\mbox{.}}{2017}]%
        {multihop_optimal}
\bibfield{author}{\bibinfo{person}{A.~M. Bedewy}, \bibinfo{person}{Y. Sun},
  {and} \bibinfo{person}{N.~B. Shroff}.} \bibinfo{year}{2017}\natexlab{}.
\newblock \showarticletitle{Age-optimal information updates in multihop
  networks}. In \bibinfo{booktitle}{{\em Proc.~IEEE ISIT}}.
  \bibinfo{pages}{576--580}.
\newblock


\bibitem[\protect\citeauthoryear{Bedewy, Sun, and Shroff}{Bedewy
  et~al\mbox{.}}{2019a}]%
        {bedewy2017age_multihop_journal_2}
\bibfield{author}{\bibinfo{person}{A.~M. Bedewy}, \bibinfo{person}{Y. Sun},
  {and} \bibinfo{person}{N.~B. Shroff}.} \bibinfo{year}{2019}\natexlab{a}.
\newblock \showarticletitle{The age of information in multihop networks}.
\newblock \bibinfo{journal}{{\em {IEEE/ACM} Trans. Netw.\/}}
  \bibinfo{volume}{27}, \bibinfo{number}{3} (\bibinfo{year}{2019}),
  \bibinfo{pages}{1248--1257}.
\newblock


\bibitem[\protect\citeauthoryear{Bedewy, Sun, and Shroff}{Bedewy
  et~al\mbox{.}}{2019b}]%
        {Bedewy_NBU_journal_2}
\bibfield{author}{\bibinfo{person}{A.~M. Bedewy}, \bibinfo{person}{Y. Sun},
  {and} \bibinfo{person}{N.~B. Shroff}.} \bibinfo{year}{2019}\natexlab{b}.
\newblock \showarticletitle{Minimizing the age of information through queues}.
\newblock \bibinfo{journal}{{\em IEEE Trans.~Inf. Theory\/}}
  \bibinfo{volume}{65}, \bibinfo{number}{8} (\bibinfo{year}{2019}),
  \bibinfo{pages}{5215--5232}.
\newblock
\showISSN{0018-9448}


\bibitem[\protect\citeauthoryear{Bianchi}{Bianchi}{2000}]%
        {bianchi2000performance}
\bibfield{author}{\bibinfo{person}{G. Bianchi}.}
  \bibinfo{year}{2000}\natexlab{}.
\newblock \showarticletitle{Performance analysis of the IEEE 802.11 distributed
  coordination function}.
\newblock \bibinfo{journal}{{\em IEEE J.\ Sel.\ Areas Commun.\/}}
  \bibinfo{volume}{18}, \bibinfo{number}{3} (\bibinfo{year}{2000}),
  \bibinfo{pages}{535--547}.
\newblock


\bibitem[\protect\citeauthoryear{Boyd and Vandenberghe}{Boyd and
  Vandenberghe}{2004}]%
        {boyd2004convex}
\bibfield{author}{\bibinfo{person}{S. Boyd} {and} \bibinfo{person}{L.
  Vandenberghe}.} \bibinfo{year}{2004}\natexlab{}.
\newblock \bibinfo{booktitle}{{\em Convex optimization}}.
\newblock \bibinfo{publisher}{New York, NY, USA: Cambridge University Press}.
\newblock


\bibitem[\protect\citeauthoryear{Chen, Bansal, Sun, Sinha, and Shroff}{Chen
  et~al\mbox{.}}{2013}]%
        {chen2013life}
\bibfield{author}{\bibinfo{person}{S. Chen}, \bibinfo{person}{T. Bansal},
  \bibinfo{person}{Y. Sun}, \bibinfo{person}{P. Sinha}, {and}
  \bibinfo{person}{N.~B. Shroff}.} \bibinfo{year}{2013}\natexlab{}.
\newblock \showarticletitle{Life-Add: Lifetime Adjustable design for WiFi
  networks with heterogeneous energy supplies}. In \bibinfo{booktitle}{{\em
  Proc. WiOpt}}. \bibinfo{pages}{508--515}.
\newblock


\bibitem[\protect\citeauthoryear{Costa, Codreanu, and Ephremides}{Costa
  et~al\mbox{.}}{2016}]%
        {CostaCodreanuEphremides_TIT}
\bibfield{author}{\bibinfo{person}{M. Costa}, \bibinfo{person}{M. Codreanu},
  {and} \bibinfo{person}{A. Ephremides}.} \bibinfo{year}{2016}\natexlab{}.
\newblock \showarticletitle{On the Age of Information in Status Update Systems
  With Packet Management}.
\newblock \bibinfo{journal}{{\em IEEE Trans.~Inf. Theory\/}}
  \bibinfo{volume}{62}, \bibinfo{number}{4} (\bibinfo{year}{2016}),
  \bibinfo{pages}{1897--1910}.
\newblock


\bibitem[\protect\citeauthoryear{El-Hoiydi}{El-Hoiydi}{2002}]%
        {el2002spatial}
\bibfield{author}{\bibinfo{person}{A. El-Hoiydi}.}
  \bibinfo{year}{2002}\natexlab{}.
\newblock \showarticletitle{Spatial TDMA and CSMA with preamble sampling for
  low power ad hoc wireless sensor networks}. In \bibinfo{booktitle}{{\em Proc.
  IEEE Int. Symp. Comput. Commun. (ISCC)}}. \bibinfo{pages}{685--692}.
\newblock


\bibitem[\protect\citeauthoryear{Gallager}{Gallager}{1996}]%
        {gallager2012discrete}
\bibfield{author}{\bibinfo{person}{R.~G. Gallager}.}
  \bibinfo{year}{1996}\natexlab{}.
\newblock \bibinfo{booktitle}{{\em Discrete stochastic processes}}.
\newblock \bibinfo{publisher}{Boston: Kluwer Academic Publishers}.
\newblock


\bibitem[\protect\citeauthoryear{Guo, Singh, Kumar, and Niu}{Guo
  et~al\mbox{.}}{2018}]%
        {DBLP:journals/ton/GuoSKN18}
\bibfield{author}{\bibinfo{person}{X. Guo}, \bibinfo{person}{R. Singh},
  \bibinfo{person}{P.~R. Kumar}, {and} \bibinfo{person}{Z. Niu}.}
  \bibinfo{year}{2018}\natexlab{}.
\newblock \showarticletitle{A Risk-Sensitive Approach for Packet Inter-Delivery
  Time Optimization in Networked Cyber-Physical Systems}.
\newblock \bibinfo{journal}{{\em {IEEE/ACM} Trans. Netw.\/}}
  \bibinfo{volume}{26}, \bibinfo{number}{4} (\bibinfo{year}{2018}),
  \bibinfo{pages}{1976--1989}.
\newblock


\bibitem[\protect\citeauthoryear{He, Yuan, and Ephremides}{He
  et~al\mbox{.}}{2017}]%
        {aphermedis_he2017optimal}
\bibfield{author}{\bibinfo{person}{Q. He}, \bibinfo{person}{D. Yuan}, {and}
  \bibinfo{person}{A. Ephremides}.} \bibinfo{year}{2017}\natexlab{}.
\newblock \showarticletitle{Optimal link scheduling for age minimization in
  wireless systems}.
\newblock \bibinfo{journal}{{\em IEEE Trans.~Inf. Theory\/}}
  \bibinfo{volume}{64}, \bibinfo{number}{7} (\bibinfo{year}{2017}),
  \bibinfo{pages}{5381--5394}.
\newblock
\showISSN{0018-9448}


\bibitem[\protect\citeauthoryear{Hsieh and Hou}{Hsieh and Hou}{2018}]%
        {hsieh2018decentralized}
\bibfield{author}{\bibinfo{person}{P. Hsieh} {and} \bibinfo{person}{I. Hou}.}
  \bibinfo{year}{2018}\natexlab{}.
\newblock \showarticletitle{A decentralized medium access protocol for
  real-time wireless ad hoc networks with unreliable transmissions}. In
  \bibinfo{booktitle}{{\em IEEE 38th International Conference on Distributed
  Computing Systems (ICDCS)}}. \bibinfo{pages}{972--982}.
\newblock


\bibitem[\protect\citeauthoryear{Hsu, Modiano, and Duan}{Hsu
  et~al\mbox{.}}{2019}]%
        {hsu2017scheduling_2}
\bibfield{author}{\bibinfo{person}{Y. Hsu}, \bibinfo{person}{E. Modiano}, {and}
  \bibinfo{person}{L. Duan}.} \bibinfo{year}{2019}\natexlab{}.
\newblock \showarticletitle{Scheduling algorithms for minimizing age of
  information in wireless broadcast networks with random arrivals}.
\newblock \bibinfo{journal}{{\em IEEE Transactions on Mobile Computing\/}}
  (\bibinfo{year}{2019}).
\newblock


\bibitem[\protect\citeauthoryear{Jiang and Walrand}{Jiang and Walrand}{2010}]%
        {jiang2010distributed}
\bibfield{author}{\bibinfo{person}{L. Jiang} {and} \bibinfo{person}{J.
  Walrand}.} \bibinfo{year}{2010}\natexlab{}.
\newblock \showarticletitle{A distributed CSMA algorithm for throughput and
  utility maximization in wireless networks}.
\newblock \bibinfo{journal}{{\em {IEEE/ACM} Trans. Netw.\/}}
  \bibinfo{volume}{18}, \bibinfo{number}{3} (\bibinfo{year}{2010}),
  \bibinfo{pages}{960--972}.
\newblock


\bibitem[\protect\citeauthoryear{Jiang, Krishnamachari, Zheng, Zhou, and
  Niu}{Jiang et~al\mbox{.}}{2018}]%
        {jiang2018timely}
\bibfield{author}{\bibinfo{person}{Zhiyuan Jiang}, \bibinfo{person}{Bhaskar
  Krishnamachari}, \bibinfo{person}{Xi Zheng}, \bibinfo{person}{Sheng Zhou},
  {and} \bibinfo{person}{Zhisheng Niu}.} \bibinfo{year}{2018}\natexlab{}.
\newblock \showarticletitle{Timely status update in massive IoT systems:
  Decentralized scheduling for wireless uplinks}.
\newblock \bibinfo{journal}{{\em arXiv preprint arXiv:1801.03975\/}}
  (\bibinfo{year}{2018}).
\newblock


\bibitem[\protect\citeauthoryear{Kadota, Sinha, and Modiano}{Kadota
  et~al\mbox{.}}{2018}]%
        {kadota2018optimizing}
\bibfield{author}{\bibinfo{person}{I. Kadota}, \bibinfo{person}{A. Sinha},
  {and} \bibinfo{person}{E. Modiano}.} \bibinfo{year}{2018}\natexlab{}.
\newblock \showarticletitle{Optimizing age of information in wireless networks
  with throughput constraints}. In \bibinfo{booktitle}{{\em Proc.\ INFOCOM}}.
  \bibinfo{pages}{1844--1852}.
\newblock


\bibitem[\protect\citeauthoryear{{Kadota}, {Sinha}, {Uysal-Biyikoglu}, {Singh},
  and {Modiano}}{{Kadota} et~al\mbox{.}}{2018}]%
        {kadota2016minimizing_journal}
\bibfield{author}{\bibinfo{person}{I. {Kadota}}, \bibinfo{person}{A. {Sinha}},
  \bibinfo{person}{E. {Uysal-Biyikoglu}}, \bibinfo{person}{R. {Singh}}, {and}
  \bibinfo{person}{E. {Modiano}}.} \bibinfo{year}{2018}\natexlab{}.
\newblock \showarticletitle{Scheduling Policies for Minimizing Age of
  Information in Broadcast Wireless Networks}.
\newblock \bibinfo{journal}{{\em IEEE/ACM Trans. Netw.\/}}
  \bibinfo{volume}{26}, \bibinfo{number}{6} (\bibinfo{year}{2018}),
  \bibinfo{pages}{2637--2650}.
\newblock


\bibitem[\protect\citeauthoryear{Kaul, Yates, and Gruteser}{Kaul
  et~al\mbox{.}}{2012}]%
        {KaulYatesGruteser-Infocom2012}
\bibfield{author}{\bibinfo{person}{S. Kaul}, \bibinfo{person}{R.~D. Yates},
  {and} \bibinfo{person}{M. Gruteser}.} \bibinfo{year}{2012}\natexlab{}.
\newblock \showarticletitle{Real-time status: How often should one update?}. In
  \bibinfo{booktitle}{{\em Proc. IEEE INFOCOM}}. \bibinfo{pages}{2731--2735}.
\newblock


\bibitem[\protect\citeauthoryear{Kowshik, Andreev, Frolov, and
  Polyanskiy}{Kowshik et~al\mbox{.}}{2019}]%
        {kowshik2019energy}
\bibfield{author}{\bibinfo{person}{S.~S. Kowshik}, \bibinfo{person}{K.
  Andreev}, \bibinfo{person}{A. Frolov}, {and} \bibinfo{person}{Y.
  Polyanskiy}.} \bibinfo{year}{2019}\natexlab{}.
\newblock \showarticletitle{Energy efficient coded random access for the
  wireless uplink}.
\newblock \bibinfo{journal}{{\em arXiv preprint arXiv:1907.09448\/}}
  (\bibinfo{year}{2019}).
\newblock


\bibitem[\protect\citeauthoryear{Kowshik and Polyanskiy}{Kowshik and
  Polyanskiy}{2019}]%
        {kowshik2019fundamental}
\bibfield{author}{\bibinfo{person}{S.~S. Kowshik} {and} \bibinfo{person}{Y.
  Polyanskiy}.} \bibinfo{year}{2019}\natexlab{}.
\newblock \showarticletitle{Fundamental limits of many-user MAC with finite
  payloads and fading}.
\newblock \bibinfo{journal}{{\em arXiv preprint arXiv:1901.06732\/}}
  (\bibinfo{year}{2019}).
\newblock


\bibitem[\protect\citeauthoryear{Li, Eryilmaz, and Li}{Li
  et~al\mbox{.}}{2013}]%
        {li2013throughput}
\bibfield{author}{\bibinfo{person}{R. Li}, \bibinfo{person}{A. Eryilmaz}, {and}
  \bibinfo{person}{B. Li}.} \bibinfo{year}{2013}\natexlab{}.
\newblock \showarticletitle{Throughput-optimal wireless scheduling with
  regulated inter-service times}. In \bibinfo{booktitle}{{\em Proc. IEEE
  INFOCOM}}. \bibinfo{pages}{2616--2624}.
\newblock


\bibitem[\protect\citeauthoryear{{Lu}, {Saifullah}, {Li}, {Sha}, {Gonzalez},
  {Gunatilaka}, {Wu}, {Nie}, and {Chen}}{{Lu} et~al\mbox{.}}{2016}]%
        {wsn_industry}
\bibfield{author}{\bibinfo{person}{C. {Lu}}, \bibinfo{person}{A. {Saifullah}},
  \bibinfo{person}{B. {Li}}, \bibinfo{person}{M. {Sha}}, \bibinfo{person}{H.
  {Gonzalez}}, \bibinfo{person}{D. {Gunatilaka}}, \bibinfo{person}{C. {Wu}},
  \bibinfo{person}{L. {Nie}}, {and} \bibinfo{person}{Y. {Chen}}.}
  \bibinfo{year}{2016}\natexlab{}.
\newblock \showarticletitle{Real-Time Wireless Sensor-Actuator Networks for
  Industrial Cyber-Physical Systems}.
\newblock \bibinfo{journal}{{\it Proc. IEEE}} \bibinfo{volume}{104},
  \bibinfo{number}{5} (\bibinfo{year}{2016}), \bibinfo{pages}{1013--1024}.
\newblock


\bibitem[\protect\citeauthoryear{Maatouk, Assaad, and Ephremides}{Maatouk
  et~al\mbox{.}}{2019}]%
        {maatouk2019minimizing}
\bibfield{author}{\bibinfo{person}{A. Maatouk}, \bibinfo{person}{M. Assaad},
  {and} \bibinfo{person}{A. Ephremides}.} \bibinfo{year}{2019}\natexlab{}.
\newblock \showarticletitle{Minimizing The Age of Information in a CSMA
  Environment}.
\newblock \bibinfo{journal}{{\em arXiv preprint arXiv:1901.00481\/}}
  (\bibinfo{year}{2019}).
\newblock


\bibitem[\protect\citeauthoryear{NS-3}{NS-3}{}]%
        {ns_3_2_0}
\bibfield{author}{\bibinfo{person}{NS-3}.}
\newblock \bibinfo{booktitle}{}.
\newblock
\showURL{%
\url{https://www.nsnam.org/.}}


\bibitem[\protect\citeauthoryear{{Ramadan}, {Dessouky}, {Abd-Elnaby}, and
  {El-Samie}}{{Ramadan} et~al\mbox{.}}{2016}]%
        {sleep_radio_enrgy_cons}
\bibfield{author}{\bibinfo{person}{K.~F. {Ramadan}}, \bibinfo{person}{M.~I.
  {Dessouky}}, \bibinfo{person}{M. {Abd-Elnaby}}, {and}
  \bibinfo{person}{F.~E.~A. {El-Samie}}.} \bibinfo{year}{2016}\natexlab{}.
\newblock \showarticletitle{Energy-efficient dual-layer MAC protocol with
  adaptive layer duration for WSNs}. In \bibinfo{booktitle}{{\em 11th
  International Conference on Computer Engineering Systems (ICCES)}}.
  \bibinfo{pages}{47--52}.
\newblock


\bibitem[\protect\citeauthoryear{Shiryaev}{Shiryaev}{1978}]%
        {shiryaev2007optimal}
\bibfield{author}{\bibinfo{person}{A.~N. Shiryaev}.}
  \bibinfo{year}{1978}\natexlab{}.
\newblock \bibinfo{booktitle}{{\em Optimal stopping rules}}.
\newblock \bibinfo{publisher}{New York: Springer-Verlag}.
\newblock


\bibitem[\protect\citeauthoryear{Sun and Cyr}{Sun and Cyr}{2019}]%
        {sun2018sampling_2}
\bibfield{author}{\bibinfo{person}{Y. Sun} {and} \bibinfo{person}{B. Cyr}.}
  \bibinfo{year}{2019}\natexlab{}.
\newblock \showarticletitle{Sampling for data freshness optimization:
  Non-linear age functions}.
\newblock \bibinfo{journal}{{\em Journal of Communications and Networks\/}}
  \bibinfo{volume}{21}, \bibinfo{number}{3} (\bibinfo{year}{2019}),
  \bibinfo{pages}{204--219}.
\newblock


\bibitem[\protect\citeauthoryear{Sun, Uysal-Biyikoglu, and Kompella}{Sun
  et~al\mbox{.}}{2018}]%
        {Yin_multiple_flows}
\bibfield{author}{\bibinfo{person}{Y. Sun}, \bibinfo{person}{E.
  Uysal-Biyikoglu}, {and} \bibinfo{person}{S. Kompella}.}
  \bibinfo{year}{2018}\natexlab{}.
\newblock \showarticletitle{Age-optimal updates of multiple information flows}.
  In \bibinfo{booktitle}{{\em IEEE INFOCOM - the 1st Workshop on the Age of
  Information (AoI Workshop)}}. \bibinfo{pages}{136--141}.
\newblock


\bibitem[\protect\citeauthoryear{Sun, Uysal-Biyikoglu, Yates, Koksal, and
  Shroff}{Sun et~al\mbox{.}}{2017}]%
        {SunJournal2016}
\bibfield{author}{\bibinfo{person}{Y. Sun}, \bibinfo{person}{E.
  Uysal-Biyikoglu}, \bibinfo{person}{R.~D. Yates}, \bibinfo{person}{C.~E.
  Koksal}, {and} \bibinfo{person}{N.~B. Shroff}.}
  \bibinfo{year}{2017}\natexlab{}.
\newblock \showarticletitle{Update or Wait: How to Keep Your Data Fresh}.
\newblock \bibinfo{journal}{{\em IEEE Trans.~Inf. Theory\/}}
  \bibinfo{volume}{63}, \bibinfo{number}{11} (\bibinfo{year}{2017}),
  \bibinfo{pages}{7492--7508}.
\newblock


\bibitem[\protect\citeauthoryear{Talak, Karaman, and Modiano}{Talak
  et~al\mbox{.}}{2018a}]%
        {talak2018distributed}
\bibfield{author}{\bibinfo{person}{R. Talak}, \bibinfo{person}{S. Karaman},
  {and} \bibinfo{person}{E. Modiano}.} \bibinfo{year}{2018}\natexlab{a}.
\newblock \showarticletitle{Distributed scheduling algorithms for optimizing
  information freshness in wireless networks}. In \bibinfo{booktitle}{{\em
  Proc. IEEE SPAWC}}. \bibinfo{pages}{1--5}.
\newblock


\bibitem[\protect\citeauthoryear{Talak, Karaman, and Modiano}{Talak
  et~al\mbox{.}}{2018b}]%
        {talak2018optimizing2}
\bibfield{author}{\bibinfo{person}{R. Talak}, \bibinfo{person}{S. Karaman},
  {and} \bibinfo{person}{E. Modiano}.} \bibinfo{year}{2018}\natexlab{b}.
\newblock \showarticletitle{Optimizing information freshness in wireless
  networks under general interference constraints}. In \bibinfo{booktitle}{{\em
  Proc. MobiHoc}}. \bibinfo{pages}{61--70}.
\newblock


\bibitem[\protect\citeauthoryear{Timmons and Scanlon}{Timmons and
  Scanlon}{2004}]%
        {timmons2004analysis}
\bibfield{author}{\bibinfo{person}{N.~F. Timmons} {and} \bibinfo{person}{W.~G.
  Scanlon}.} \bibinfo{year}{2004}\natexlab{}.
\newblock \showarticletitle{Analysis of the performance of IEEE 802.15. 4 for
  medical sensor body area networking}. In \bibinfo{booktitle}{{\em First
  Annual IEEE Communications Society Conference on Sensor and Ad Hoc
  Communications and Networks. IEEE SECON 2004.}} \bibinfo{pages}{16--24}.
\newblock


\bibitem[\protect\citeauthoryear{Wald}{Wald}{1973}]%
        {wald1973sequential}
\bibfield{author}{\bibinfo{person}{A. Wald}.} \bibinfo{year}{1973}\natexlab{}.
\newblock \bibinfo{booktitle}{{\em Sequential analysis}}.
\newblock \bibinfo{publisher}{New York: Courier Corporation}.
\newblock


\bibitem[\protect\citeauthoryear{Wang and Dong}{Wang and Dong}{2019}]%
        {wang2019broadcast}
\bibfield{author}{\bibinfo{person}{M. Wang} {and} \bibinfo{person}{Y. Dong}.}
  \bibinfo{year}{2019}\natexlab{}.
\newblock \showarticletitle{Broadcast Age of Information in CSMA/CA Based
  Wireless Networks}.
\newblock \bibinfo{journal}{{\em arXiv preprint arXiv:1904.03477\/}}
  (\bibinfo{year}{2019}).
\newblock


\bibitem[\protect\citeauthoryear{Yates and Kaul}{Yates and Kaul}{2017}]%
        {yates2017status}
\bibfield{author}{\bibinfo{person}{R.~D. Yates} {and} \bibinfo{person}{S.~K.
  Kaul}.} \bibinfo{year}{2017}\natexlab{}.
\newblock \showarticletitle{Status updates over unreliable multiaccess
  channels}. In \bibinfo{booktitle}{{\em Proc.~IEEE ISIT}}.
  \bibinfo{pages}{331--335}.
\newblock


\bibitem[\protect\citeauthoryear{Yates and Kaul}{Yates and Kaul}{2018}]%
        {RYatesTIT16_2}
\bibfield{author}{\bibinfo{person}{R.~D. Yates} {and} \bibinfo{person}{S.~K.
  Kaul}.} \bibinfo{year}{2018}\natexlab{}.
\newblock \showarticletitle{The age of information: Real-time status updating
  by multiple sources}.
\newblock \bibinfo{journal}{{\em IEEE Trans.~Inf. Theory\/}}
  \bibinfo{volume}{65}, \bibinfo{number}{3} (\bibinfo{year}{2018}),
  \bibinfo{pages}{1807--1827}.
\newblock


\bibitem[\protect\citeauthoryear{Yun, Yi, Shin, et~al\mbox{.}}{Yun
  et~al\mbox{.}}{2012}]%
        {yun2012optimal}
\bibfield{author}{\bibinfo{person}{S. Yun}, \bibinfo{person}{Y. Yi},
  \bibinfo{person}{J. Shin}, {et~al\mbox{.}}} \bibinfo{year}{2012}\natexlab{}.
\newblock \showarticletitle{Optimal CSMA: a survey}. In
  \bibinfo{booktitle}{{\em Proc. ICCS}}. \bibinfo{pages}{199--204}.
\newblock


\end{thebibliography}
\ifreport 
\balance
\section{Appendix}
\appendix
\section{Derivation of (\ref{access_prob_in_agiven_cycle})}\label{Appendix_A'}
Define $S_l$ as the residual sleeping period of source~$l$ after a sleep-wake cycle is over. Due to the memoryless property of exponential distribution, since the sleeping period of source~$l$ is exponentially distributed with mean value $\mathbb{E}[T]/r_l$, $S_l$ is also exponentially distributed with mean value $\mathbb{E}[T]/r_l$. According to the proposed sleep-wake scheduler, source~$l$ gains access to the channel and transmits successfully in a given cycle if $S_i\ge S_l+t_s$ for all $i\neq l$. Hence, we have
\begin{align}
\alpha_l&=\mathbb{P}(S_i\ge S_l+t_s,~\forall i\neq l)\\&
\stackrel{(a)}{=}\mathbb{E}[\mathbb{P}(S_i\ge S_l+t_s,~\forall i\neq l\vert S_l)]\\&
\stackrel{(b)}{=}\mathbb{E}\left[\prod_{i\neq l}\mathbb{P}(S_i\ge S_l+t_s\vert S_l)\right]\\&
=\int_0^\infty\left[\prod_{i\neq l}e^{-r_i\frac{s_l+t_s}{\mathbb{E}[T]}}\right]\frac{r_l}{\mathbb{E}[T]}e^{-r_l\frac{s_l}{\mathbb{E}[T]}}ds_l\\&
=\frac{r_l e^{r_l\frac{t_s}{\mathbb{E}[T]}}}{e^{\sum_{i=1}^Mr_i\frac{t_s}{\mathbb{E}[T]}}\sum_{i=1}^Mr_i},
\end{align}
where (a) is due to $\mathbb{P}[A]=\mathbb{E}[\mathbb{P}(A\vert B)]$, and (b) is due to the fact that $S_l$ is independent for different sources.

\section{Derivation of (\ref{sigma_l})}\label{Appendix_A''}
Recall the definition of $S_l$ at the beginning of Appendix \ref{Appendix_A'}. Moreover, define $P_l$ as the probability that source~$l$ transmits a packet in a given cycle, regardless whether packet collision occurs or not. For the sleep-wake mechanism the we are utilizing here, source~$l$ transmits in a given cycle as long as no other source wakes up before $S_l-t_s$, i.e., $S_i\ge S_l-t_s$ for all $i\neq l$. Hence, we have
\begin{align}
P_l&=\mathbb{P}(S_i\ge S_l-t_s,~\forall i\neq l)\\&
=\mathbb{P}(S_i\ge S_l-t_s,~\forall i\neq l,~S_l\ge t_s)+\mathbb{P}(S_l<t_s),\label{P_l}
\end{align}
where the first term in the RHS is given by
\begin{align}
&\mathbb{P}(S_i\ge S_l-t_s\ge 0,~\forall i\neq l)\\
=&\mathbb{E}[\mathbb{P}(S_i\ge S_l-t_s\ge 0,~\forall i\neq l\vert S_l)]\\
=&\mathbb{E}\left[\prod_{i\neq l}\mathbb{P}(S_i\ge S_l-t_s\ge 0\vert S_l)]\right]\\
=&\int_{t_s}^\infty\left[\prod_{i\neq l}e^{-r_i\frac{s_l-t_s}{\mathbb{E}[T]}}\right]\frac{r_l}{\mathbb{E}[T]}e^{-r_l\frac{s_l}{\mathbb{E}[T]}}ds_l\\
=&e^{-r_l\frac{t_s}{\mathbb{E}[T]}}\frac{r_l}{\sum_{i=1}^Mr_i}.\label{App15_56}
\end{align}
Since $S_l$ is exponentially distributed with mean value $\mathbb{E}[T]/r_l$, we can determine the second term in the RHS of \eqref{P_l} as follows:
\begin{align}\label{eq69_2}
\mathbb{P}(S_l<t_s)=1-e^{-r_l\frac{t_s}{\mathbb{E}[T]}}.
\end{align}
Substituting \eqref{App15_56} and \eqref{eq69_2} back into \eqref{P_l}, we get
\begin{align}
P_l=1-e^{-r_l\frac{t_s}{\mathbb{E}[T]}}+e^{-r_l\frac{t_s}{\mathbb{E}[T]}}\frac{r_l}{\sum_{i=1}^Mr_i}.
\end{align}
Let $\alpha_{\text{col}}$ denote the collision probability in a given cycle. We have $\alpha_{\text{col}}=1-\sum_{i=1}^M\alpha_i$, because each cycle includes either a successful transmission or a collision. Moreover, let $\mathbb{E}[\mathbf{Idle}]$ denote the mean of the idle duration in a cycle. By the renewal theory in stochastic processes \cite{gallager2012discrete}, $\sigma_l$ is given by
\begin{align}
\sigma_l&=\frac{P_l\mathbb{E}[T]}{(\sum_{i=1}^M\alpha_i+\alpha_{\text{col}})\mathbb{E}[T]+\mathbb{E}[\mathbf{Idle}]}\\&
=\frac{P_l\mathbb{E}[T]}{\mathbb{E}[T]+\frac{\mathbb{E}[T]}{\sum_{i=1}^Mr_i}}\\&
=\frac{[1-e^{-r_l\frac{t_s}{\mathbb{E}[T]}}]\sum_{i=1}^Mr_i+r_le^{-r_l\frac{t_s}{\mathbb{E}[T]}}}{\sum_{i=1}^Mr_i+1}.
\end{align}

\section{Proof of Lemma \ref{lemma_feasibility}}\label{Appendix_A}
First of all, we need to show that \eqref{condition_beta} has a solution for $\beta^{\star}$.
\begin{lemma}\label{lemma_solution_19}
Suppose that $w_l>0$, and $b_l>0$ for all $l$. If $\sum_{i=1}^M b_i \geq 1$, then \eqref{condition_beta} has a unique solution on $[0,\max_l(b_l/\sqrt{w_l})]$; otherwise, \eqref{condition_beta} has no solution.  
\end{lemma}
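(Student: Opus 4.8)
The plan is to analyze the function $g(\beta)\triangleq\sum_{i=1}^M\min\{b_i,\beta\sqrt{w_i}\}$ on $[0,\infty)$ and show it hits the value $1$ exactly once when $\sum_{i=1}^M b_i\ge 1$, and never when $\sum_{i=1}^M b_i<1$. First I would observe that for each $i$, the map $\beta\mapsto\min\{b_i,\beta\sqrt{w_i}\}$ is continuous, nondecreasing, equals $0$ at $\beta=0$, is strictly increasing (with slope $\sqrt{w_i}>0$) on $[0,b_i/\sqrt{w_i}]$, and is constant at $b_i$ for $\beta\ge b_i/\sqrt{w_i}$. Summing over $i$, $g$ is continuous and nondecreasing on $[0,\infty)$, with $g(0)=0$ and $g(\beta)=\sum_{i=1}^M b_i$ for all $\beta\ge\max_l(b_l/\sqrt{w_l})$.

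Next I would use the intermediate value theorem together with strict monotonicity on the relevant interval. If $\sum_{i=1}^M b_i\ge 1$, then since $g(0)=0<1\le g(\max_l(b_l/\sqrt{w_l}))$ and $g$ is continuous, there exists $\beta^\star\in[0,\max_l(b_l/\sqrt{w_l})]$ with $g(\beta^\star)=1$. For uniqueness, note that on the interval $[0,\max_l(b_l/\sqrt{w_l})]$ at least one summand — the one achieving the maximum of $b_l/\sqrt{w_l}$, or more simply: on $[0,\min_l(b_l/\sqrt{w_l})]$ every summand — is strictly increasing; more carefully, for any $\beta$ in the open interval $(0,\max_l(b_l/\sqrt{w_l}))$ there is at least one index $i$ with $\beta<b_i/\sqrt{w_i}$, so $g$ is strictly increasing at $\beta$, hence strictly increasing on the whole interval $[0,\max_l(b_l/\sqrt{w_l})]$, which gives uniqueness there. (And for $\beta>\max_l(b_l/\sqrt{w_l})$, $g(\beta)=\sum_i b_i\ge 1$ with equality only in the degenerate boundary case, so no new solutions arise in the strict-inequality regime; I would either state the root as the one in the compact interval or note $g$ is eventually constant.) If instead $\sum_{i=1}^M b_i<1$, then $g(\beta)\le\sum_{i=1}^M b_i<1$ for all $\beta\ge 0$, so \eqref{condition_beta} has no solution.

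The argument is essentially elementary, so the main obstacle is just bookkeeping at the "kinks": making sure the claim that $g$ is \emph{strictly} increasing on $[0,\max_l(b_l/\sqrt{w_l})]$ is stated correctly, since individual summands saturate at different thresholds $b_i/\sqrt{w_i}$. The clean way around this is the observation that on $[0,\max_l(b_l/\sqrt{w_l})]$ the summand attaining $\max_l(b_l/\sqrt{w_l})$ is strictly increasing throughout, which alone forces strict monotonicity of $g$ and hence at most one root; combined with the IVT this pins down existence and uniqueness simultaneously.
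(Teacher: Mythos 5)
Your proposal is correct and follows essentially the same route as the paper: continuity and strict monotonicity of $\beta\mapsto\sum_{i=1}^M\min\{b_i,\beta\sqrt{w_i}\}$ on $[0,\max_l(b_l/\sqrt{w_l})]$ combined with the intermediate value theorem, plus the trivial upper bound $\sum_i b_i<1$ for the no-solution case. The only cosmetic difference is that the paper treats $\sum_i b_i=1$ and $\sum_i b_i>1$ as separate cases while you handle them uniformly, and you justify the strict monotonicity (via the summand attaining $\max_l(b_l/\sqrt{w_l})$) that the paper merely asserts.
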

\begin{proof}
It is clear that if $\sum_{i=1}^M b_i =1$, then $\beta^\star$ satisfies \eqref{condition_beta} if and only if $\beta^\star\ge \max_l(b_l/\sqrt{w_l})$. Hence, \eqref{condition_beta} has a unique solution on $[0,\max_l(b_l/\sqrt{w_l})]$ in this case. We now focus on the case of $\sum_{i=1}^M b_i >1$. In this case, we have the following: 
\begin{itemize}
\item If $\beta^\star = 0$, then $\sum_{i=1}^M\min\{b_i,\beta^{\star}\sqrt{w_i}\}=0$.
\item If $\beta^\star =\max_l(b_l/\sqrt{w_l})$, then $\sum_{i=1}^M\min\{b_i,\beta^{\star}\sqrt{w_i}\}>1$. 
\item The left hand side (LHS) of \eqref{condition_beta} is strictly increasing and continuous in $\beta^\star$ on $[0,\max_l(b_l/\sqrt{w_l})]$.
\end{itemize}
As a result, \eqref{condition_beta} has a unique solution on $[0,\max_l(b_l/\sqrt{w_l})]$ in this case as well. Finally, if $\sum_{i=1}^M b_i <1$, then $\sum_{i=1}^M\min\{b_i,\beta^{\star}\sqrt{w_i}\}\le \sum_{i=1}^M b_i < 1$. Hence, \eqref{condition_beta} has no solution if $\sum_{i=1}^M b_i <1$. This completes the proof. 
\end{proof}
Since we have $\sum_{i=1}^M b_i \geq 1$, Lemma \ref{lemma_solution_19} implies that \eqref{condition_beta} has a solution for $\beta^\star$. Now, we are ready to prove Lemma \ref{lemma_feasibility}. Consider the following constraints:
\begin{align}\label{equivelent_constrant}
\frac{r_l\frac{t_s}{\mathbb{E}[T]}\sum_{i=1}^Mr_i+r_l}{\sum_{i=1}^Mr_i+1}\leq b_l, \forall l.
\end{align}
Since we have 
\begin{align}
&1-e^{-r_l\frac{t_s}{\mathbb{E}[T]}}\le r_l\frac{t_s}{\mathbb{E}[T]},\\&
e^{-r_l\frac{t_s}{\mathbb{E}[T]}}\leq 1,
\end{align}
then, 
\begin{align}
[1-e^{-r_l\frac{t_s}{\mathbb{E}[T]}}]\sum_{i=1}^Mr_i+r_le^{-r_l\frac{t_s}{\mathbb{E}[T]}}\leq r_l\frac{t_s}{\mathbb{E}[T]}\sum_{i=1}^Mr_i+r_l.
\end{align}
Thus, if the constraints in \eqref{equivelent_constrant} are satisfied for a given solution $\mathbf{r}$, then the constraints of Problem \textbf{1} are satisfied as well. We can observe that the constraints in \eqref{equivelent_constrant} are equivalent to the following set of constraints:
\begin{align}\label{inequality_feasible}
\begin{split}
&r_l\le b_l\frac{x+1}{1+\frac{t_s}{\mathbb{E}[T]}x},\forall l\\ 
&\sum_{i=1}^Mr_i=x.
\end{split}
\end{align}
Now, it is easy to show that if $x\le \sqrt{\mathbb{E}[T]/t_s}$, then  $x\le (x+1)/[1+(t_s/\mathbb{E}[T])x]$. Meanwhile, our proposed solution $\mathbf{r}^{\star}$ given by \eqref{r_f_b_gr_1} - \eqref{condition_beta} satisfies $\sum_{i=1}^Mr_i^{\star}=x^{\star}$. Thus, if we can show that $x^{\star}\le \sqrt{\mathbb{E}[T]/t_s}$, then
\begin{align}
r_l^{\star}=\min\{b_l,\beta^{\star}\sqrt{w_l}\}x^{\star}\le b_lx^{\star}\le b_l \frac{x^{\star}+1}{1+\frac{t_s}{\mathbb{E}[T]}x^{\star}},
\end{align}
and the constraints in \eqref{inequality_feasible} hold for our proposed solution $\mathbf{r}^\star$. What remains is to prove that $x^{\star}\le \sqrt{\mathbb{E}[T]/t_s}$. We have
\begin{align}
x^{\star}=&\frac{-1}{2}+\sqrt{\frac{1}{4}+\frac{\mathbb{E}[T]}{t_s}}\\
=& \frac{\frac{\mathbb{E}[T]}{t_s}}{\frac{1}{2}+\sqrt{\frac{1}{4}+\frac{\mathbb{E}[T]}{t_s}}}\\
\le& \frac{\frac{\mathbb{E}[T]}{t_s}}{\sqrt{\frac{\mathbb{E}[T]}{t_s}}}= \sqrt{\frac{\mathbb{E}[T]}{t_s}}.
\end{align}
Hence, our proposed solution $\mathbf{r}^{\star}$ given by \eqref{r_f_b_gr_1} - \eqref{condition_beta} satisfies  \eqref{inequality_feasible}, which implies \eqref{equivelent_constrant}. This completes the proof.

\section{Proof of Lemma \ref{lemma_bounds}}\label{Appendix_B}
By replacing $e^{-r_l(t_s/\mathbb{E}[T])}e^{\sum_{i=1}^Mr_i(t_s/\mathbb{E}[T])}$ in \eqref{problem2} of Problem \textbf{2} by 1, we obtain the following optimization problem:
\begin{align}
\begin{split}\label{lower_bound_optimization_problme1}
\min_{r_l>0}& \sum_{l=1}^M\frac{w_l}{r_l}\left(1+\sum_{i=1}^Mr_i\right)+\sum_{l=1}^Mw_l
\end{split}\\
\textbf{s.t.}~&r_l\leq b_l\left(\sum_{i=1}^Mr_i+1\right), \forall l.
\end{align}
Since $e^{-r_l(t_s/\mathbb{E}[T])}e^{\sum_{i=1}^Mr_i(t_s/\mathbb{E}[T])}\geq 1$, Problem \eqref{lower_bound_optimization_problme1} serves as a lower bound of Problem \textbf{2}, and hence a lower bound of Problem \textbf{1} as well. Define an auxiliary variable $y=\sum_{i=1}^Mr_i+1$. By this, we solve a two-layer nested optimization problem. In the inner layer, we optimize $\mathbf{r}$ for a given $y$. After solving $\mathbf{r}$, we will optimize $y$ in the outer layer. Now, fix the value of $y$, we obtain the following optimization problem (the inner layer):
\begin{align}
\begin{split}\label{optimization_auxiliary2}
\min_{r_i>0}& \sum_{i=1}^M\left[\frac{w_iy}{r_i}+w_i\right]
\end{split}\\
\textbf{s.t.}~&r_l\leq b_ly, \forall l,\label{const12}\\&
\sum_{i=1}^Mr_i+1=y.\label{const22}
\end{align}
The objective function in \eqref{optimization_auxiliary2} is a convex function. Moreover, the constraints in \eqref{const12} and \eqref{const22} are affine. Hence, Problem \eqref{optimization_auxiliary2} is convex. We use the Lagrangian duality approach to solve Problem  \eqref{optimization_auxiliary2}. Problem \eqref{optimization_auxiliary2} satisfies Slater's conditions. Thus, the Karush-Kuhn-Tucker (KKT) conditions are  both  necessary  and  sufficient for optimality \cite{boyd2004convex}. Let $\mathbf{\gamma}=(\gamma_1, \ldots, \gamma_M)$ and $\mu$ be the Lagrange multipliers associated with constraints \eqref{const12} and \eqref{const22}, respectively. Then, the Lagrangian of Problem \eqref{optimization_auxiliary2} is given by
\begin{equation}\label{lag88}
\begin{split}
\!\!\!\!L(\mathbf{r},\mathbf{\gamma},\mu)=&\sum_{i=1}^M\left[\frac{w_iy}{r_i}+w_i\right]\\&+\sum_{i=1}^M\!\gamma_i(r_i\!-\!b_iy)+\mu\left(\sum_{i=1}^Mr_i\!+\!1\!-\!y\right).\!\!\!\!
\end{split}
\end{equation}
Take the derivative of \eqref{lag88} with respect to $r_l$ and set it equal to 0, we get
\begin{align}
\frac{-w_ly}{r_l^2}+\gamma_l+\mu=0.
\end{align}
This and KKT conditions imply
\begin{align}
&r_l=\sqrt{\frac{w_ly}{\gamma_l+\mu}},\\
&\gamma_l\geq 0, r_l-b_ly\leq 0,\\
&\gamma_l( r_l-b_ly)=0,\\
&\sum_{i=1}^Mr_i+1=y.
\end{align}
If $\gamma_l=0$, then $r_l=\sqrt{(w_ly)/\mu}$ and $r_l\leq b_ly$; otherwise, if $\gamma_l>0$, then $r_l=b_ly$ and $r_l<\sqrt{(w_ly)/\mu}$. Hence, we have
\begin{equation}
r_l=\min\left\lbrace b_ly,\sqrt{\frac{w_ly}{\mu^{\star}}}\right\rbrace,
\end{equation}
where by \eqref{const22}, $\mu^{\star}$ satisfies 
\begin{align}
\sum_{i=1}^M\min\left\lbrace b_iy,\sqrt{\frac{w_iy}{\mu^{\star}}}\right\rbrace+1=y.
\end{align}
We can observe that $\mu^{\star}$ is a function of $y$. Because of that, we can define $\beta^{\star}(y)=\sqrt{1/(y\mu^{\star})}$, which is a function of $y$ as well. Then, the optimum solution to \eqref{optimization_auxiliary2} can be rewritten as
\begin{equation}\label{solution_cond_1_2_2}
r_l=\min\{b_l,\beta^{\star}(y)\sqrt{w_l}\}y, \forall l,
\end{equation}
where $\beta^{\star}(y)$ satisfies
\begin{equation}\label{solution_cond_1_1_1}
\sum_{i=1}^M\min\{b_i,\beta^{\star}(y)\sqrt{w_i}\}+\frac{1}{y}=1.
\end{equation}
Substituting \eqref{solution_cond_1_2_2} and \eqref{solution_cond_1_1_1} back in Problem \eqref{optimization_auxiliary2}, we get the following optimization problem (the outer layer):
\begin{align}
\begin{split}\label{eqa55}
\min_{y>1}& \sum_{i=1}^M\left[\frac{w_i}{\min\{b_i,\beta^{\star}(y)\sqrt{w_i}\}}+w_i\right]
\end{split}\\
\textbf{s.t.}~&\sum_{i=1}^M\min\{b_i,\beta^{\star}(y)\sqrt{w_i}\}+\frac{1}{y}=1.\label{eqa56}
\end{align}
Problem \eqref{eqa55} serves as a lower bound of Problem \textbf{2}, and hence a lower bound of Problem \textbf{1}. We can observe that the objective function in \eqref{eqa55} is decreasing in $\beta^{\star}(y)$. Moreover, \eqref{eqa56} implies that $\beta^{\star}(y)$ is strictly increasing in $y$ if $\sum_{i=1}^Mb_i\geq 1$. As a result, $y=\infty$ is the optimal solution of Problem \eqref{eqa55}. At the limit, the constraint \eqref{eqa56} converges to \eqref{condition_beta}. Since $\beta^{\star}$ serves as a solution for \eqref{condition_beta}, we can deduce that $\lim_{y\to\infty}\beta^\star (y)=\beta^\star$. Thus, we have the following lower bound:
\begin{equation}\label{lower_bound_final_sumbi_ge1}
\bar{\Delta}^{\text{peak}}_{\text{opt}}\ge \bar{\Delta}^{\text{peak}}_{\text{opt},2}\geq \sum_{i=1}^M\left[\frac{w_i}{\min\{b_i,\beta^{\star}\sqrt{w_i}\}}+w_i\right].
\end{equation}
This completes the proof.

\section{Proof of Lemma \ref{lemma_feasibility2}}\label{Appendix_C}
Because $1-e^{-x}\leq x$, we can obtain
\begin{align}
\begin{split}
&r_le^{-r_l\frac{t_s}{\mathbb{E}[T]}}+[1-e^{-r_l\frac{t_s}{\mathbb{E}[T]}}]\sum_{i=1}^Mr_i\\&=r_l+[1-e^{-r_l\frac{t_s}{\mathbb{E}[T]}}]\left(\sum_{i=1}^Mr_i-r_l\right)\\&\leq  r_l+r_l\frac{t_s}{\mathbb{E}[T]}\left(\sum_{i=1}^mr_i-r_l\right),
\end{split}
\end{align}
Hence, if $\mathbf{r}$ satisfies the constraint
\begin{align}\label{tighter_constraint_ac}
\frac{r_l+r_l\frac{t_s}{\mathbb{E}[T]}\left(\sum_{i=1}^Mr_i-r_l\right)}{\sum_{i=1}^Mr_i+1}\le b_l,
\end{align}
then $\mathbf{r}$ also satisfies the constraint of Problem \textbf{1} in \eqref{problem1}. Consider the following set of solution indexed by a parameter $c>0$:
\begin{align}
&r_l=cu_l,~\forall l,\label{proposed_sol_ac}\\
&u_l=\frac{b_l}{1-\sum_{i=1}^Mb_i},~\forall l\label{proposed_sol_ac103}
\end{align}
We want to find a $c$ such that the solution in \eqref{proposed_sol_ac} and \eqref{proposed_sol_ac103} is feasible for Problem \textbf{1}. To achieve this, we first substitute the solution \eqref{proposed_sol_ac} and \eqref{proposed_sol_ac103} into the constraint \eqref{tighter_constraint_ac}, and get
\begin{align}\label{lem4_4eq104}
\frac{cu_l+c^2u_l\frac{t_s}{\mathbb{E}[T]}\left(\sum_{i=1}^Mu_i-u_l\right)}{c\sum_{i=1}^Mu_i+1}\le b_l.
\end{align} 
If equality is satisfied in \eqref{lem4_4eq104}, we can
obtain the following quadratic equation for c: 
\begin{align}\label{lem4_4eq105}
c^2\left[u_l\frac{t_s}{\mathbb{E}[T]}\left(\sum_{i=1}^Mu_i\!-\!u_l\right)\right]\!+\!c\left(u_l\!-\!b_l\sum_{i=1}^Mu_i\right)\!-\!b_l=0.
\end{align}
The solution to \eqref{lem4_4eq105} is given by $c_l$ in \eqref{feasible_factor}. Hence, $r_l = c_l u_l$ is feasible for the constraint \eqref{tighter_constraint_ac} for source~$l$. 

As feasibility for one source only is insufficient, we further prove that the solution in \eqref{proposed_sol_ac} and \eqref{proposed_sol_ac103} with $c=\min_lc_l$ is feasible for satisfying the energy constraints of all sources $l = 1,\ldots, M$. To that end, let us consider the monotonicity of the LHS of \eqref{lem4_4eq104}. By taking the  derivative with respect to $c$, we get
\begin{align}
\frac{u_l\frac{t_s}{\mathbb{E}[T]}\left(\sum_{i=1}^Mu_i-u_l\right)\left(c^2\sum_{i=1}^Mu_i+2c\right)+u_l}{(c\sum_{i=1}^Mu_i+1)^2}>0.
\end{align}
Hence, 
\begin{align}\label{lem4_4eq106}
r_l = \left(\min_{l} c_l\right) u_l, ~\forall l,
\end{align}
is feasible for the energy constraints of all sources $l = 1,\ldots, M$. After some manipulations, the solution in \eqref{proposed_sol_ac103} and \eqref{lem4_4eq106} are equivalently expressed as \eqref{r_f_b_gr_1} and  \eqref{condition_x*_beta*} - \eqref{feas_fact_eq27}. This completes the proof.

\section{Proof of Lemma \ref{lemma_lower_bound_sum_i_b_i_le_1}}\label{Appendix_D}
By replacing $e^{-r_l(t_s/\mathbb{E}[T])}/r_l$ by $e^{-\sum_{i=1}^Mr_i(t_s/\mathbb{E}[T])}/[$ $b_l(\sum_{i=1}^Mr_i+1)]$  and $e^{\sum_{i=1}^Mr_i(t_s/\mathbb{E}[T])}$ by 1 in \eqref{problem2} of Problem \textbf{2}, we obtain the following optimization problem:
\begin{align}
\begin{split}\label{lower_bound_case_bf_le_1_1}
\min_{r_l>0}& \sum_{l=1}^M\frac{w_le^{-\sum_{i=1}^Mr_i\frac{t_s}{\mathbb{E}[T]}}}{b_i}+\sum_{l=1}^Mw_l
\\
\textbf{s.t.}~&r_l\leq b_l\left(\sum_{i=1}^Mr_i+1\right), \forall l.
\end{split}
\end{align}
Since  $r_l\leq b_l(\sum_{i=1}^Mr_i$ $+1)$, we have
\begin{align}
 \frac{e^{-r_l\frac{t_s}{\mathbb{E}[T]}}}{r_l}\geq \frac{e^{-\sum_{i=1}^Mr_i\frac{t_s}{\mathbb{E}[T]}}}{b_l\left(\sum_{i=1}^Mr_i+1\right)}.
 \end{align}
 Moreover, we have  $e^{\sum_{i=1}^Mr_i(t_s/\mathbb{E}[T])}\geq 1$. Thus, Problem \eqref{lower_bound_case_bf_le_1_1} serves as a lower bound of Problem \textbf{2}, and hence a lower bound of Problem \textbf{1} as well. By removing the constant term $\sum_{l=1}^Mw_l$ in the objective function of Problem  \eqref{lower_bound_case_bf_le_1_1} and then taking the logarithm, Problem \eqref{lower_bound_case_bf_le_1_1} is reformulated as 
 \begin{align}
\begin{split}\label{problem64}
\min_{r_i>0}& \log\left(\sum_{i=1}^M\frac{w_i}{b_i}\right)-\sum_{i=1}^Mr_i\frac{t_s}{\mathbb{E}[T]}
\\
\textbf{s.t.}~&r_l\leq b_l\left(\sum_{i=1}^Mr_i+1\right), \forall l.
\end{split}
\end{align}
Obviously, Problem \eqref{problem64} is a convex optimization problem and satisfies Slater's  conditions. Thus,  the KKT conditions are are necessary and sufficient for optimality. Let $\mathbf{\tau}=(\tau_1,\ldots,\tau_M)$ be the Lagrange multipliers associated with the  constraints of Problem  \eqref{problem64}. Then, the Lagrangian of Problem \eqref{problem64} is given by 
\begin{equation}\label{lag110}
\begin{split}
L(\mathbf{r}, \mathbf{\tau})=&\log\left(\sum_{i=1}^M\frac{w_i}{b_i}\right)-\left(\sum_{i=1}^Mr_i\frac{t_s}{\mathbb{E}[T]}\right)\\&+\sum_{i=1}^M\tau_i\left[r_i-b_i\left(\sum_{i=1}^Mr_i+1\right)\right].
\end{split}
\end{equation}
Take the derivative of \eqref{lag110} with respect to $r_l$ and set it equal to 0, we get
\begin{align}
\frac{-t_s}{\mathbb{E}[T]}+\tau_l(1-b_l)-\sum_{i\neq l}\tau_ib_i=0.
\end{align}
This and KKT conditions imply
\begin{align}
&\tau_l=\frac{t_s}{\mathbb{E}[T](1-b_l)}+\frac{\sum_{i\neq l}\tau_ib_i}{1-b_l},\label{eqa67}\\&
\tau_l\geq 0, r_l-b_l\left(\sum_{i=1}^Mr_i+1\right)\leq 0,\label{eqa68}\\&
\tau_l\left[r_l-b_l\left(\sum_{i=1}^Mr_i+1\right)\right]=0.\label{eqa69}
\end{align}
Since $\sum_{i=1}^Mb_i<1$, \eqref{eqa67} implies that $\tau_l>0$ for all $l$. This and \eqref{eqa69} result in 
\begin{equation}\label{eqa70}
r_l=b_l\left(\sum_{i=1}^Mr_i+1\right), \forall l.
\end{equation}
Because $\sum_{i=1}^Mb_i<1$, \eqref{eqa70} has a unique solution, which is given by 
\begin{equation}\label{eqa71}
r_l=\frac{b_l}{1-\sum_{i=1}^Mb_i}, \forall l.
\end{equation}
Hence, the solution to \eqref{lower_bound_case_bf_le_1_1} and \eqref{problem64} is given by \eqref{eqa71}. Substitute \eqref{eqa71} into \eqref{lower_bound_case_bf_le_1_1}, we get the following lower bound:
\begin{equation}\label{lower_bound_72}
\bar{\Delta}^{\text{peak}}_{\text{opt}}\ge \bar{\Delta}^{\text{peak}}_{\text{opt},2}\geq \sum_{l=1}^M\frac{w_le^{\frac{-\sum_{i=1}^Mb_i}{1-\sum_{i=1}^Mb_i}\frac{t_s}{\mathbb{E}[T]}}}{b_l}+\sum_{l=1}^Mw_l.
\end{equation}
This completes the proof.

\section{Proof of Corollary \ref{corollary_centeralized_scheduler}}\label{Appendix_E}
We start by solving Problem \eqref{problem_centr} for optimal $\mathbf{a}$.  Problem \eqref{problem_centr} is a convex optimization problem and satisfies Slater's conditions. Thus, the KKT conditions are necessary and sufficient for optimality. Let $\mathbf{\lambda}=(\lambda_1,\ldots,\lambda_M)$ and $\nu$ be the Lagrange multipliers associated with the constraints \eqref{eq105} and \eqref{eq106}, respectively. Then, the Lagrangian of Problem \eqref{problem_centr} is given by
\begin{align}\label{lag119}
\begin{split}
L(\mathbf{a},\mathbf{\lambda},\nu)=&\sum_{i=1}^M\left[\frac{w_i}{a_i}+w_i\right]\\&+\sum_{i=1}^M\lambda_i(a_i-b_i)+\nu\left(\sum_{i=1}^Ma_i-1\right).
\end{split}
\end{align}
Take the derivative of \eqref{lag119} with respect to $a_l$ and set it equal to 0, we get
\begin{align}
\frac{-w_l}{a_l^2}+\lambda_l+\nu=0.
\end{align}
This and KKT conditions imply
\begin{align}
&a_l=\sqrt{\frac{w_l}{\lambda_l+\nu}},\\&
\lambda_l\ge 0,~a_l-b_l\le 0,\\&
\lambda_l(a_l-b_l)=0,\\&
\sum_{i=1}^Ma_i=1.\label{eq110}
\end{align}
In case $\lambda_l=0$, then we have that $a_l=\sqrt{w_l/\nu}$ and $a_l-b_l\le 0$; otherwise, if $\lambda_l>0$, then we have $a_l=b_l$ and $a_l\le \sqrt{w_l/\nu}$. Thus, the optimal solution is given by
\begin{align}
a_l^{\star}=\min\left\lbrace b_l,\sqrt{\frac{w_l}{\nu^{\star}}}\right\rbrace,
\end{align}
where by \eqref{eq110}, $\nu^{\star}$ satisfies
\begin{align}\label{eq126}
\sum_{i=1}^M\min\left\lbrace b_i,\sqrt{\frac{w_i}{\nu^{\star}}}\right\rbrace=1.
\end{align}
By comparing \eqref{eq126} with \eqref{condition_beta}, we can deduce that $\sqrt{1/\nu^{\star}}=\beta^{\star}$, where $\beta^{\star}$ satisfies
\begin{align}\label{eq111}
\sum_{i=1}^M\min\{b_i,\beta^{\star}\sqrt{w_i}\}=1.
\end{align}
Since $\sum_{i=1}^Mb_i\ge 1$, \eqref{eq111} has a solution for $\beta^{\star}$ as shown in Lemma \ref{lemma_solution_19}. Hence, the solution to Problem  \eqref{problem_centr} can be rewritten as
\begin{align}\label{sol127}
a_l=\min\{b_l,\beta^{\star}\sqrt{w_l}\},~\forall l.
\end{align}
Substituting \eqref{sol127} into \eqref{problem_centr}, we obtain
\begin{align}
\bar{\Delta}_{\text{opt-s}}^{\text{peak}}=\sum_{i=1}^M\left[\frac{w_i}{\min\{b_i,\beta^{\star}\sqrt{w_i}\}}+w_i\right],
\end{align}
which is equal to the asymptotic optimal value of Problem \textbf{1} in \eqref{asymptotic_value_final}. This completes the proof. 
\fi

\end{document}